\title
[Distortion risk measures of sums of two counter-monotonic risks]
{Distortion risk measures of sums of two counter-monotonic risks}
\author{Chunle Huang}
\address{Chunle Huang, Institute of Mathematics, Hunan University, Changsha, 410082, China.}
\email{chunle@zju.edu.cn; 402961544@qq.com}
\newtheorem{thm}{Theorem}[section]
\newtheorem{lem}[thm]{Lemma}
\newtheorem{cor}[thm]{Corollary}
\theoremstyle{definition}
\newtheorem{defn}[thm]{Definition}
\newtheorem{ex}[thm]{Example}
\let\uppercasenonmath\@gobble% disables title uppercase
\begin{document}
\bibliographystyle{amsalpha+}
\maketitle
\begin{abstract} 
In this paper, we will show that under certain conditions, associated to any fixed distortion function $g$, the distortion risk measure of a sum of two counter-monotonic risks can be expressed as the sum of two related distortion risk measures of the marginals involved, one associated to the original distortion function $g$ and the other  associated to the dual distortion function of $g$. This result extends some of the work in \cite{Chaoubi et al. (2020)} and \cite{HLD} since the class of distortion risk measures includes the risk measure of VaR and TVaR as special cases. 
\end{abstract}

%\tableofcontents 
\section{Introduction} 
A risk measure is by definition a mapping from the set of random variables, usually representing the risks at hand, to the set of real numbers $\mathbb{R}$. Risk measures are a helpful tool for decision making because they are able to summarize the information available about the random variable $X$ into one single number $\rho[X]$. Common risk measures in actuarial science are premium principles; see for instance Chapter 5 in \cite{KGDD2008}. Other risk measures, which measure the upper tails of distribution functions, are used for determining provisions and capital requirements of an insurer, in order to avoid insolvency. Such measures of risk are considered in \cite{ADE1999}, \cite{Dhaene et al. (2006)}, \cite{WH2000}, and \cite{W1996} among others. In this paper, we will mainly concentrate on risk measures that can be used for reserving and solvency purposes. 

Along this line, one fundamental risk measure is VaR, which is defined as $\text{VaR}_p[X] = F^{-1}_X(p)$ for any risk $X$ and a probability level $p \in (0, 1)$. VaR has a simple interpretation in terms of over- or undershoot probabilities and in today’s financial world, it has become the benchmark risk measure. However, a single VaR at a predetermined level $p$ does not give any information about the thickness of the upper tail of the distribution function under consideration. This is a considerable shortcoming since in practice a regulator is often not only concerned with the frequency of default, but also with the severity of default. Also shareholders and management should be concerned with the question ‘how bad is bad?’ when they want to evaluate the risks at hand in a consistent way. To overcome this problem, TVaR can be used which is defined as
$\text{TVaR}_p[X] = \frac{1}{1 - p}\int^1_{p}F^{-1}_X(q)dq$ for any risk $X$ and a probability level $p \in (0, 1)$. 
We remark that both VaR and TVaR are important in practice since regulators accept these measures as the basis for setting capital requirements for market risk exposure. 

However, from the definition of TVaR, the TVaR risk measure only uses the upper tail of the distribution. Hence, this risk measure does not create incentive for taking actions that increase the distribution function for outcomes smaller than $Q_p$. Also, we remark that TVaR only accounts for the expected shortfall and hence, does not properly adjust for extreme low-frequency but high severity losses. Wang in \cite{W1996} and \cite{Wang2000} independently introduced a family of risk measures as a solution to these problems, which is now called the class of distortion risk measures,  by using the concept of distortion function as introduced in Yaari's dual theory of choice under risk; see also \cite{W2001} and \cite{WY1998}. In the literature, the distortion risk measure associated with a distortion function $g$ is usually denoted by $\rho_g[X]$ for any random variable $X$. Note that the distortion function $g$ is assumed to be independent of the distribution function of the random variable $X$. More and more attention has been paid to the class of distortion risk measures since Wang's work; see for instance \cite{DDV1999}, \cite{Dhaene et al. (2012)}, \cite{Dhaene et al. (2006)} and \cite{H2004}. One reason for its growing popularity is that by choosing specific distortion functions one can easily obtain VaR and TVaR, in other words, the class of distortion risk measures includes both VaR and TVaR as special cases. Another reason is the correspondence between the two common theories of choice under risk, namely the expected utility theory and Yaari's dual theory of choice under risk. For instance, one can express risk-averse decision makers using the concept of distortion function which, in turn, defines a particular type of distortion risk measures. The third reason is that one can express some of the common stochastic orderings among risks, stop-loss and convex orderings in particular, in terms of distortion risk measures.  

One concept that can be used to study different risk measures is comonotonicity, which refers that the components of a random vector $\underline{X} = (X_1, X_2, ..., X_n)$ can all be expressed as non-decreasing functions of the same random variable, so, it corresponds with the riskiest dependence structure in a given Fréchet space $\mathcal{R}_n(F_1, F_2, ..., F_n)$. 
%Over the last decades, researchers in economics, financial mathematics and actuarial science have obtained a lot of results related to the concept of comonotonicity in their respective fields of interest.  
Comonotonicity has been extensively discussed in \cite{Dhaene et al. (2002a)} and \cite{Dhaene et al. (2002b)} and for its numerous applications in finance and actuarial science we recommend the reader to refer to \cite{DDV2011}, \cite{DD2007}, \cite{DVG2005}, \cite{DWY2000}, \cite{WD1998}, \cite{WYP1997}. For our purposes, we point out that an important result regarding comonotonicty is that distortion risk measures of the comonotonic sum ${S}^c$ can be expressed as the sum of the corresponding distortion risk measures of the marginals involved, that is, the formula $\rho_g[S^c] = \sum^n_{i = 1}\rho_g[X_i]$ holds for any distortion function $g$, see \cite{Dhaene et al. (2012)}, \cite{Dhaene et al. (2006)} and \cite{W1996} for more details. In this paper, we want to investigate the additivity property for sums of two counter-monotonic risks. 

As a natural generalization of comonotonicty, the concept of counter-monotonicity appears as a bivariate negative dependence between variables. However, much less attention in the literature  has been devoted to this concept, though the authors in \cite{DD2003} obtained very early the simple characterizations of comonotonicity and counter-monotonicity by extremal correlations. One possible reason is that it is not easy to express the VaR of a counter-monotonic sum in terms of the VaR’s of the marginal components of the sum. Indeed, in a financial context, counter-monotonicity may be useful to assess the merge of dependent assets. For example, Cheung et al. \cite{CDL2014} investigated under what conditions, the sum of counter-monotonic risks will decrease the overall risk. Recently, Chaoubi et al. \cite{Chaoubi et al. (2020)} obtained closed-form expressions for the VaR and TVaR of the sum of two counter-monotonic risks for several specific choices of marginal distributions, including symmetric or unimodal distributions. They showed that under certain conditions, the VaR and TVaR of the counter-monotonic sum can be expressed as the sum of several related individual measures. Hanbali et al. \cite{HLD} generalized the main results of Chaoubi et al. \cite{Chaoubi et al. (2020)} by providing explicit decompositions of the VaR, TVaR and stop-loss premium of
counter-monotonic sums with general marginal distributions.

Following the work of several authors in the literature, we will derive in this paper closed-form decompositions of the distortion risk measures of counter-monotonic sums with suitable marginal distributions. In particular, we will show that under certain conditions, associated to any fixed distortion function $g$, the distortion risk measure $\rho_g[S^{-}]$ of the counter-monotonic sum $S^{-}$ can be expressed as the sum of two related distortion risk measures of the marginals involved, one associated to the original distortion function $g$ and the other  associated to the dual distortion function of $g$. This result extends some of the work of Chaoubi et al. \cite{Chaoubi et al. (2020)} and Hanbali et al. \cite{HLD} from the case of VaR and TVaR to the general case of distortion risk measures and contributes to the body of literature devoted to the decomposition of risk measures of sums of two random variables; see for instance \cite{CDD2017}, \cite{CL2013}, \cite{DD1999}, \cite{Dhaene et al. (2006)} and \cite{Dhaene et al. (2012)} among others.

The remainder of the paper is organized as follows. In Section 2, we introduce the notations and relevant definitions. In Section 3, we prove the main result. Section 4 presents several applications of the main result. Section 5 concludes the paper. 

\section{Preliminaries}
\subsection{Quantile functions} 
All random variables are defined on a common probability space $(\Omega, \mathcal{F}, \mathbb{P})$ and it is always assumed that all random variables are such that the risk measures introduced hereafter are finite. 

The cumulative distribution function (cdf) of a random variable $X$ is denoted by $F_X$. The left inverse of $F_X$ is denoted by $F^{-1}_X$, and is defined as $$F^{-1}_X(p) = \inf\{x \in \mathbb{R} \,\,|\,\, F_X(x) \geq p\}, \,\, p \in [0, 1]$$ with the convention $\inf \emptyset = +\infty$. The right inverse of $F_X$ is denoted by $F^{-1+}_X$, and is defined as $$F^{-1+}_X(p) = \sup\{x \in \mathbb{R} \,\, | \,\, F_X(x) \leq p\}, \,\, p \in [0, 1]$$ with the convention $\sup\emptyset = -\infty$. If $F_X$ is continuous and strictly increasing, the left and the right inverses are equal. Otherwise, on horizontal segments of $F_X$, the inverses $F^{-1}_X(p)$ and $F^{-1+}_X(p)$ are different and they satisfy $F^{-1}_X(p) < F^{-1+}_X(p)$. In such cases, the generalized $\alpha$-inverse $F^{-1(\alpha)}_X$, for any $\alpha \in [0, 1]$, is useful, which is defined as: 
$$F^{-1(\alpha)}_X(p) = (1 - \alpha)F^{-1}_X(p) + \alpha F^{-1+}_X(p), \,\, p \in [0, 1].$$ For more details on generalized inverses we recommend the interested reader to refer to \cite{DDGK}, \cite{Dhaene et al. (2002a)}, \cite{EH2013} and \cite{KRK2015} among others. 
\subsection{Distortion risk measures} 
In this subsection, we recall the class of distortion risk measures, introduced by Wang \cite{W1996}. The quantile risk measure and TVaR belong to this class. The expectation of $X$ if it exists, can be written as 
$$E[X] = -\int^0_{-\infty}[1 - \overline{F}_X(x)]dx + \int^{\infty}_0 \overline{F}_X(x)dx.$$ Wang \cite{W1996} defines a family of risk measures by using the concept of distortion function as introduced in Yaari's dual theory of choice under risk; see also \cite{WY1998}. A distortion function is defined as a non-decreasing function $g: [0, 1] \to [0, 1]$ such that $g(0) = 0$ and $g(1) = 1$. The distortion risk measure associated with distortion function $g$ is denoted by $\rho_g[X]$ and is defined by 
$$\rho_g[X] = -\int^0_{-\infty}[1 - g(\overline{F}_X(x))]dx + \int^{\infty}_0 g(\overline{F}_X(s))dx$$ 
for any random variable $X$. Note that the distortion function $g$ is assumed to be independent of the distribution function of the random variable $X$. The distortion function $g(q) = q$ corresponds to $E[X]$. Also note that $g_1(q) \leq g_2(q)$ for all $q \in [0, 1]$ implies that $\rho_{g_1}[X] \leq \rho_{g_2}[X]$. 

We remark that Dhaene et al in \cite{Dhaene et al. (2012)} and \cite{Dhaene et al. (2006)} recently obtained Lebesgue-Stieltjes integral representations of distortion risk measures, see Theorem 4 and Theorem 6 in \cite{Dhaene et al. (2012)}, which is of interest and very useful because it enables us to treat distortion risk measures in a very simple manner. Using this representation they fully showed for general case the additivity of distortion risk measures for comonotonic risks. This property is well-known but, unfortunately, the proofs presented in the literature are often incomplete, in the sense that they only hold for a particular type of distortion functions, such as the class of concave distortion functions. We provide a number of distortion risk measures as follows. The reader is strongly referred to \cite{Dhaene et al. (2012)} and \cite{Dhaene et al. (2006)}. 

\begin{ex}[VaR is a distortion risk measure]\label{0080}
Recall that the Value-at-Risk (VaR) of a random variable $X$ at level $p \in (0, 1)$ is defined by $\text{VaR}_p[X] = F^{-1}_X(p)$. For any fixed $p \in (0, 1)$, we consider the function $g$ defined by 
$$g(q) = \mathbb{I}(q > 1 - p), \,\, q \in [0, 1].$$ 
Then it is easy to check that $g$ is a left-continuous distortion function. Therefore, by Theorem 6 in \cite{Dhaene et al. (2012)}, for any random variable $X$, the distortion risk measure $\rho_{g}[X]$ has the following Lebesgue-Stieltjes integral representation: $$\rho_{g}[X] = \int_{[0, 1]}F^{-1}_{X}(1 - q) dg(q),$$ which implies that 
$$ 
\rho_g[X] = \lim_{q \to (1 - p)^{+}}F^{-1}_{X}(1 - q)g(q) - \lim_{q \to (1 - p)^{-}}F^{-1}_{X}(1 - q)g(q) = \text{VaR}_p[X].
$$ 
\end{ex}
\begin{ex} [TVaR is a distortion risk measure] \label{0088} 
Recall that the Tail-Value-at-Risk (TVaR) of a random variable $X$ at level $p \in (0, 1)$ is defined by $\text{TVaR}_p[X] = \frac{1}{1 - p}\int^1_{p}F^{-1}_X(q)dq$. For any fixed $p \in (0, 1)$, we consider the function $g$ defined by $$g(q) = \min\Big\{\frac{q}{1 - p}, 1\Big\}, \,\, q \in [0, 1].$$ Then it is easy to check that $g$ is a continuous distortion function. Therefore, by Theorem 6 in \cite{Dhaene et al. (2012)}, we have for any random variable $X$ $$\rho_g[X] = \int_0^1\text{VaR}_{1 - q}[X]dg(q) 
= \frac{1}{1 - p}\int_0^{1 - p}\text{VaR}_{1 - q}[X]dq 
= \text{TVaR}_{p}[X].$$
\end{ex}
\begin{ex} [The Wang Transform Risk Measure] 
The Wang Transform risk measure was introduced by Wang \cite{Wang2000} as an example of distortion risk measures. For any fixed $p \in (0, 1)$, we consider the distortion function $g$ defined by $$g(q) = \Phi[\Phi^{-1}(q) + \Phi^{-1}(p)], \quad q \in [0, 1].$$ Then it is to check that $g$ is a continuous distortion function, which is usually called the Wang Transform (distortion function) at level $p$. The corresponding distortion risk measure $\rho_g[X]$ is called the Wang Transform risk measure, denoted by $\text{WT}_p[X]$. More details about this kind of distortion risk measure can be found in \cite{Wang2000}. 
\end{ex}

\subsection{ Fréchet spaces and Fréchet bounds} 
Let $F_1, F_2, ..., F_n$ be univariate cumulative distribution functions (c.d.f.'s, in short) and consider the Fréchet space $\mathcal{R}_n(F_1, F_2, ..., F_n)$ consisting of all $n$-dimensional c.d.f.'s $F_X$, or equivalently of all the $n$-dimensional random vectors $X = (X_1, X_2, ..., X_n)$ possessing $F_1, F_2, ..., F_n$ as their marginal c.d.f.'s. For all $X$ in $\mathcal{R}_n(F_1, F_2, ..., F_n)$, we have the following inequality 
$$M_n(x) \leq F_X(x) \leq W_n(x)$$
for all $x = (x_1, x_2, ..., x_n) \in \mathbb{R}^n$, where $W_n$ is usually referred to as the Fréchet upper bound of $\mathcal{R}_n(F_1, F_2, ..., F_n)$ and is defined by 
$$W_n(x) = \min\{F_1(x_1), F_2(x_2), ..., F_n(x_n)\}, \,\, x \in \mathbb{R}^n,$$
while $M_n$ is usually referred to as the Fréchet lower bound of $\mathcal{R}_n(F_1, F_2, ..., F_n)$ and is defined by 
$$M_n(x) = \max\Big\{\sum^n_{i = 1}F_i(x_i) - n + 1, 0\Big\}, \,\, x \in \mathbb{R}^n.$$

$W_n$ is reachable in $\mathcal{R}_n(F_1, F_2, ..., F_n)$. Indeed, for a random variable $U$, uniformly distributed on $[0, 1]$, it can be shown that $W_n$ is the c.d.f. of the random vector 
$$(F^{-1}_1(U), F^{-1}_2(U), ..., F^{-1}_n(U)) \in \mathcal{R}_n(F_1, F_2, ..., F_n).$$
The elements of the Fréchet space $\mathcal{R}_n(F_1, F_2, ..., F_n)$ which have a multivariate c.d.f. given by $W_n(x)$, are said to be comonotonic, see the following subsection. On the contrary, when $n \geq 3$, $M_n$ is not always a c.d.f. anymore, as shown by Tchen \cite{Tchen1980}. Theorem 3.7 in \cite{Joe1997} provides a necessary and sufficient condition for $M_n$ to be a c.d.f. in $\mathcal{R}_n(F_1, F_2, ..., F_n)$, though this condition is generally not easy to check. Inspired by a result in \cite{HW1999}, Dhaene and Denuit in \cite{DD1999} provided a more practical condition for $M_n$ to be a proper c.d.f. in $\mathcal{R}_n(F_1, F_2, ..., F_n)$ by introducing the concept of mutually exclusivity of random vectors. 

\subsection{Comomontonic and counter-monotonic risks} In this subsection, we briefly recall the concept of comonotonicity and counter-monotonicity, both of which consider extreme dependence relationships between variables. 
\begin{defn}
A random vector $\underline{X} = (X_1, X_2, ..., X_n)$ is said to be comonotonic, denoted by $\underline{X}^c = (X_1^c, X_2^c, ..., X_n^c)$, if $$\underline{X} \stackrel{d}{=} (F^{-1}_{X_1}(U), F^{-1}_{X_2}(U), ..., F^{-1}_{X_n}(U))$$ 
where $ \stackrel{d}{=}$ stands for equality in distribution and $U$ is a uniformly distributed random variable on $[0, 1]$. 
\end{defn}
By definition we know that the components of a comonotonic random vector $\underline{X}^c$ are jointly driven by a single random variable transformed by the non-decreasing functions $F^{-1}_{X_1}, F^{-1}_{X_2}, ..., F^{-1}_{X_n}$. The marginals of $\underline{X}^c$ are, however, equal in distribution to those of the original vector $\underline{X}$. Comonotonicity has been extensively discussed in \cite{Dhaene et al. (2002a)} and \cite{Dhaene et al. (2002b)}. For its applications in finance and actuarial science we recommend the reader to refer to \cite{DDV2011}, \cite{DD2007}, \cite{DVG2005}, \cite{DWY2000}, \cite{WD1998} and \cite{WYP1997}. An important result regarding comonotonicty is that distortion risk measures of the comonotonic sum $\underline{X}^c$ can be expressed as the sum of the corresponding distortion risk measures of the marginals involved, that is, the formula $\rho_g[S^c] = \sum^n_{i = 1}\rho_g[X_i]$ holds for any distortion function $g$, see \cite{Dhaene et al. (2012)}, \cite{Dhaene et al. (2006)} and \cite{W1996} for more details. 

\begin{defn} 
A random vector $\underline{X} = (X_1, X_2)$ of dimension two is said to be counter-monotonic, denoted by $\underline{X}^{-} = (X_1^{-}, X_2^{-})$, if there is a random variable $U$, uniformly distributed on $[0, 1]$, such that $$\underline{X} \stackrel{d}{=} (F^{-1}_{X_1}(U), F^{-1}_{X_2}(1 - U)).$$ 

It is easy to see that the counter-monotonic modification $\underline{X}^{-}$ of a random vector $\underline{X}$ has the same marginal distributions as $\underline{X}$, but the components of $\underline{X}^{-}$ move perfectly in the opposite direction. Hence, counter-monotonicity corresponds to extreme negative dependence structure. It is well known that counter-monotonicity leads to a convex lower bound for a sum of dependent risks, see for instance \cite{DD1999} and \cite{CL2014}. Let $S^{-}$ be the sum of the components of a counter-monotonic vector $\underline{X}^{-} = (X_1^{-}, X_2^{-})$, then we have $S^{-} \stackrel{d}{=} F^{-1}_{X_1}(U) + F^{-1}_{X_2}(1 - U)$ for a random variable $U$ which is uniformly distributed on $[0, 1]$. For more details about $S^{-}$ and its risk measures, we recommend the reader to see \cite{Chaoubi et al. (2020)} and \cite{HLD}. 
\end{defn}
\subsection{Dispersive order} In this subsection we recall the concept of dispersive order, which is useful when comparing univariate random variables by their variability. More details about this concept can be found for instance in \cite{DDGK}, \cite{Jeon et al. (2006)} and \cite{S-S}. 
\begin{defn} 
Let $X$ and $Y$ be two random variables with cumulative distribution functions $F_X$ and $F_Y$, respectively. If $F_X^{-1}(u) - F_X^{-1}(v) \leq F_Y^{-1}(u) - F_Y^{-1}(v)$, for any $0 < v \leq u < 1$, then $X$ is said to be smaller than $Y$ according to the dispersive order, denoted by $X \leq_{disp} Y$.  
\end{defn}
It is easy to check that $X \leq_{disp} Y \iff F_Y^{-1}(u) - F_X^{-1}(u)$ is increasing in $u \in (0, 1)$ and $X \leq_{disp} Y \iff X + c \leq_{disp} Y$ for any $c \in  \mathbb{R}$.  
\begin{ex}  
Assume that $(X_1, X_2)$ is a pair of normal random variables such that $X_i \sim N(\mu_i, \sigma^2_i), i = 1, 2$. Then $X_1 \leq_{disp} X_2$ if $\sigma_1 < \sigma_2$ and $X_2 \leq_{disp} X_1$ if $\sigma_1 \geq \sigma_2$. In fact, for any $p \in (0, 1)$, by Theorem 1 in \cite{Dhaene et al. (2002a)} or Property 1.5.16 in \cite{DDGK}, we have $$F_{X_1}^{-1}(p) = \mu_1 + \sigma_1\Phi^{-1}(p) \,\, \text{and} \,\, F_{X_2}^{-1}(p) = \mu_2 + \sigma_2 \Phi^{-1}(p).$$
It follows that $$F_{X_2}^{-1}(p) - F_{X_1}^{-1}(p) = (\mu_2 - \mu_1) + (\sigma_2 - \sigma_1)\Phi^{-1}(p),\,\, p \in(0, 1).$$
This implies that $X_1 \leq_{disp} X_2$ if $\sigma_1 < \sigma_2$. The proof of the case $\sigma_1 \geq \sigma_2$ is similar.
\end{ex}

\begin{ex} [See Example 1 in \cite{Chaoubi et al. (2020)} and Example 3 in \cite{Jeon et al. (2006)}]We provide below a list of other examples of pairs of random variables $(X_1, X_2)$ such that $X_2\leq_{disp} X_1$. 
\begin{enumerate} 
\item If $X_i \sim \text{Student}(\nu_i), i = 1, 2$, with degrees of freedom $0 < \nu_1 < \nu_2$, then $X_2 \leq_{disp} X_1$;  
\item If $X_1 \sim \text{Student}(\nu)$ with $\nu > 0$ and $X_2 \sim N(0, 1)$, then $X_2 \leq_{disp} X_1$;  
\item If $X_1 \sim \text{Laplace}(0, 2^{0.5})$ and $X_2 \sim N(0, 1)$, then $X_2 \leq_{disp} X_1$; 
\item If $X_1 \sim \text{Logistic}(0, 1)$ and $X_2 \sim N(0, 1)$, then $X_2 \leq_{disp} X_1$. 
\end{enumerate}
\end{ex}

\section{Main result}
In this section, we study distortion risk measures of sums of two counter-monotonic risks. The main result is the following theorem, which tells us that under some suitable conditions, associated to any distortion function $g$, the distortion risk measure $\rho_g[S^{-}]$ of the counter-monotonic sum $S^{-} = X^{-}_1 + X^{-}_2$ can be expressed as the sum of two individual distortion risk measures, one associated to the original distortion function $g$ while the other  associated to the dual distortion function of $g$.  

\begin{thm} \label{0265}
Assume that $(X^{-}_1, X^{-}_2)$ is a pair of symmetric random variables such that their distribution functions $F_{X_i}(x), i = 1, 2$, are continuous and strictly increasing on $(F^{-1+}_{X_{i}}(0), F^{-1}_{X_i}(1))$. If $X_2 \leq_{disp} X_1$, then for any distortion function $g$ we have $$\rho_g[S^{-}] = \rho_g[X_1] + \rho_{\overline{g}}[X_2]$$ 
where $\overline{g}$ is the dual distortion function of $g$. 
\end{thm}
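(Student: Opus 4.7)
The plan is to convert the counter-monotonic sum $S^-$ into a nondecreasing function of a single uniform random variable by exploiting the symmetry of $X_2$, identify the quantile function of $S^-$ using the dispersive-order hypothesis, apply the Lebesgue--Stieltjes representation of $\rho_g$ from Theorem 6 of \cite{Dhaene et al. (2012)}, and then perform a change of variable to recognise the remaining piece as a dual-distortion risk measure.

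First I would exploit the symmetry of $X_2$. Since $X_2$ is symmetric about some centre $\mu_2$ and $F_{X_2}$ is continuous, the quantile identity $F_{X_2}^{-1}(1-u)=2\mu_2-F_{X_2}^{-1}(u)$ holds for every $u\in(0,1)$. Substituting this into the standard representation $S^-\stackrel{d}{=}F_{X_1}^{-1}(U)+F_{X_2}^{-1}(1-U)$ with $U\sim\mathrm{Unif}(0,1)$, I obtain $S^-\stackrel{d}{=}h(U)$, where
$$h(u):=F_{X_1}^{-1}(u)-F_{X_2}^{-1}(u)+2\mu_2.$$
The dispersive-order hypothesis $X_2\leq_{disp}X_1$ says precisely that $u\mapsto F_{X_1}^{-1}(u)-F_{X_2}^{-1}(u)$ is nondecreasing on $(0,1)$, so $h$ is nondecreasing, and the continuity and strict monotonicity of the $F_{X_i}$ on their supports make $h$ continuous. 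The standard quantile-transform theorem then yields
$$F_{S^-}^{-1}(p)=h(p)=F_{X_1}^{-1}(p)+F_{X_2}^{-1}(1-p),\qquad p\in(0,1).$$

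Next, invoking the Lebesgue--Stieltjes representation I would split the distortion integral as
$$\rho_g[S^-]=\int_{[0,1]}F_{S^-}^{-1}(1-q)\,dg(q)=\int_{[0,1]}F_{X_1}^{-1}(1-q)\,dg(q)+\int_{[0,1]}F_{X_2}^{-1}(q)\,dg(q),$$
recognising the first summand as $\rho_g[X_1]$. For the second summand I would use the pushforward relation between $dg$ and $d\overline g$: since $\overline g(q)=1-g(1-q)$, one has $\overline g(b)-\overline g(a)=g(1-a)-g(1-b)$ for all $0\leq a<b\leq 1$, so the Lebesgue--Stieltjes measure generated by $g$ is mapped onto that generated by $\overline g$ under the reflection $q\mapsto 1-q$. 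The corresponding change of variable gives
$$\int_{[0,1]}F_{X_2}^{-1}(q)\,dg(q)=\int_{[0,1]}F_{X_2}^{-1}(1-q)\,d\overline g(q)=\rho_{\overline g}[X_2],$$
and combining this with the previous display produces the desired decomposition $\rho_g[S^-]=\rho_g[X_1]+\rho_{\overline g}[X_2]$.

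The hard part is purely technical and is concentrated in two places. Upgrading ``$h$ is nondecreasing'' to ``$F_{S^-}^{-1}=h$'' rests on left-continuity of $h$, which is why the continuity assumption on the marginal cdfs is genuinely needed. Separately, the change of variable from $dg$ to $d\overline g$ must be handled with some care when $g$ has atoms, because the dual of a left-continuous distortion function is only right-continuous, so endpoint contributions in the Lebesgue--Stieltjes integrals need to be matched correctly. Neither issue is deep, but both need to be discharged with some rigour, and together they account for most of the actual bookkeeping in the proof.
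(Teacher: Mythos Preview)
Your proposal is correct and follows essentially the same route as the paper: exploit the symmetry of $X_2$ to rewrite $\varphi^-(p)$ as $F_{X_1}^{-1}(p)-F_{X_2}^{-1}(p)+2\mu_2$, use the dispersive order for monotonicity and the strict-increase hypothesis for continuity, identify $F_{S^-}^{-1}=\varphi^-$, split the Lebesgue--Stieltjes integral, and change variables to the dual distortion. The technical bookkeeping you flag in your last paragraph is exactly what the paper discharges by invoking Theorem~7 of \cite{Dhaene et al. (2012)} to write $g=c_l g_l+c_r g_r$ with $g_l$ left-continuous and $g_r$ right-continuous, treating each piece with the matching representation (Theorems~6 and~4 respectively) and using $F_{X_i}^{-1}=F_{X_i}^{-1+}$ on $(0,1)$, which the strict-increase assumption guarantees, to move between the two.
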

\begin{proof}
First, we find that the aggregate loss function $\varphi^{-}(p)$ can be expressed as  
\begin{align*} 
\varphi^{-}(p) &= F^{-1}_{X_1}(p) + F^{-1}_{X_2}(1 - p) \\
&= F^{-1}_{X_1}(p) - F^{-1}_{-X_2}(p) \\ 
&= F^{-1}_{X_1}(p) - F^{-1}_{Y}(p) + 2F^{-1}_{X_2}(0.5), \,\, p \in (0, 1), 
\end{align*}
where $Y = 2F^{-1}_{X_2}(0.5) - X_2$. For the definition of $\varphi^{-}(p)$, see \cite{Chaoubi et al. (2020)}. Then, for any $y \in \mathbb{R}$, we compute 
\begin{align*} 
F_Y(y) &= \text{Pr}[2F^{-1}_{X_2}(0.5) - X_2 \leq y] \\
&= \text{Pr}[X_2 \geq 2F^{-1}_2(0.5) - y] \\
&= 1 - F_{X_2}(2F^{-1}_{X_2}(0.5) - y) \\
&= 1 - F_{X_2}(F^{-1}_{X_2}(0.5) + (F^{-1}_{X_2}(0.5)- y)) \\
&= F_{X_2}(F^{-1}_{X_2}(0.5) - (F^{-1}_{X_2}(0.5)- y)) = F_{X_2}(y), 
\end{align*} 
thanks to the symmetry and continuity of the distribution function $F_{X_2}$. This means that the variable $Y$ is equal to $X_2$ in the sense of distribution. Therefore, we have $$\varphi^{-}(p) = F^{-1}_{X_1}(p) - F^{-1}_{X_2}(p) + 2F^{-1}_{X_2}(0.5), \,\, p \in (0, 1).$$ By the condition $X_2 \leq_{disp} X_1$, we see that $\varphi^{-}(p)$ is a non-decreasing function on $(0, 1)$. Moreover, since the distribution functions $F_{X_i}(x), i = 1, 2$, are strictly increasing on $(F^{-1+}_{X_{i}}(0), F^{-1}_{X_i}(1))$, the inverse distribution functions $F^{-1}_{X_i}(p), i = 1, 2$ are continuous on $(0, 1)$, which in turn implies that $\varphi^{-}(p)$ is a continuous function on $(0, 1)$. So, by Theorem 1 in \cite{Dhaene et al. (2002a)}, we have 
\begin{align*}  
F^{-1}_{S^{-}}(p) = F^{-1}_{\varphi^{-}(U)}(p) = \varphi^{-}(F^{-1}_U(p)) = \varphi^{-}(p) = F^{-1}_{X_1}(p) + F^{-1}_{X_2}(1 - p)
\end{align*}
for any $p \in (0, 1)$. 

Second, for any distortion function $g$, by Theorem 7 in \cite{Dhaene et al. (2012)}, we know that there exist a left-continuous distortion function $g_l$, a right-continuous distortion function $g_r$, and two non-negative weights $c_l, c_r$ with $c_l + c_r= 1$ such that $g = c_lg_l + c_rg_r$ and 
\begin{align*} 
\rho_g[S^{-}] = c_l\rho_{g_l}[S^{-}] + c_r\rho_{g_r}[S^{-}]. 
\end{align*} 
For the left-continuous distortion function $g_l$, by Theorem 6 in \cite{Dhaene et al. (2012)}, the distortion risk measure $\rho_{g_l}[S^{-}]$ has the following Lebesgue-Stieltjes integral representation: $$\rho_{g_l}[S^{-}] = \int_{[0, 1]}F^{-1}_{S^{-}}(1 - q) dg_l(q).$$ 
From this, it follows that 
\begin{align*}
\rho_{g_l}[S^{-}] &= \int_{[0, 1]}F^{-1}_{S^{-}}(1 - q) dg_l(q) \\
&= \int_{[0, 1]}F^{-1}_{X_1}(1 - q) dg_l(q) + \int_{[0, 1]}F^{-1}_{X_2}(q) dg_l(q) \\
&= \int_{[0, 1]}F^{-1}_{X_1}(1 - q) dg_l(q) + \int_{[0, 1]}F^{-1}_{X_2}(1 - q) d\overline{g}_l(q) \\
&= \int_{[0, 1]}F^{-1}_{X_1}(1 - q) dg_l(q) + \int_{[0, 1]}F^{-1+}_{X_2}(1 - q) d\overline{g}_l(q) \\
&= \rho_{g_l}[X_1] + \rho_{\overline{g}_l}[X_2]
\end{align*}
where $\overline{g}_l$, the dual distortion function of $g_l$, is a right-continuous distortion function on $[0, 1]$. The equation $F^{-1+}_{X_2}(p) = F^{-1}_{X_2}(p), p \in (0, 1)$, is true because the distribution function $F_{X_2}(x)$ is strictly increasing on $(F^{-1+}_{X_{2}}(0), F^{-1}_{X_2}(1))$ while the integral representation $\rho_{\overline{g}_l}[X_2] = \int_{[0, 1]}F^{-1+}_{X_2}(1 - q) d\overline{g}_l(q)$ is due to Theorem 4 in \cite{Dhaene et al. (2012)}. Similarly, for the right-continuous distortion function $g_r$, by Theorem 4 in \cite{Dhaene et al. (2012)}, the distortion risk measure $\rho_{g_r}[S^{-}]$ has the following Lebesgue-Stieltjes integral representation: $$\rho_{g_r}[S^{-}] = \int_{[0, 1]}F^{-1+}_{S^{-}}(1 - q) dg_r(q).$$ Recall that, under the condition given in Theorem \ref{0265}, the aggregate loss function $\varphi^{-}(p) = F^{-1}_{X_1}(p) + F^{-1}_{X_2}(1- p)$ is a non-decreasing and continuous function on $(0, 1)$. Thus, for any $p \in (0, 1)$, by Theorem 1 in \cite{Dhaene et al. (2002a)} we have   
$$F^{-1+}_{S^{-}}(p) = F^{-1+}_{\varphi^{-}(U)}(p) = \varphi^{-}(F^{-1+}_U(p)) = \varphi^{-1}(p) = F^{-1}_{X_1}(p) + F^{-1}_{X_2}(1 - p). $$
It follows that 
\begin{align*}
\rho_{g_r}[S^{-}] &= \int_{[0, 1]}F^{-1}_{X_1}(1 - q) dg_r(q) + \int_{[0, 1]}F^{-1}_{X_2}(q) dg_r(q) \\
&= \int_{[0, 1]}F^{-1}_{X_1}(1 - q) dg_r(q) + \int_{[0, 1]}F^{-1}_{X_2}(1 - q) d\overline{g}_r(q) \\
&= \int_{[0, 1]}F^{-1+}_{X_1}(1 - q)] dg_r(q) + \int_{[0, 1]}F^{-1}_{X_2}(1 - q) d\overline{g}_r(q) \\
&= \rho_{g_r}[X_1] + \rho_{\overline{g}_r}[X_2], 
\end{align*}
where $\overline{g}_r$, the dual distortion function of $g_l$, is a left-continuous distortion function on $[0, 1]$. As above, the equation $F^{-1+}_{X_1}(p) = F^{-1}_{X_1}(p), p \in (0, 1)$, is true because the distribution function $F_{X_1}(x)$ is strictly increasing on $(F^{-1+}_{X_{1}}(0), F^{-1}_{X_1}(1))$. 

Finally, we compute  
\begin{align*}  
\rho_g[S^{-}] &= c_l\rho_{g_l}[S^{-}] + c_r\rho_{g_r}[S^{-}] \\ 
&= c_l(\rho_{g_l}[X_1] + \rho_{\overline{g}_l}[X_2]) + c_r(\rho_{g_r}[X_1] + \rho_{\overline{g}_r}[X_2]) \\  
&= (c_l\rho_{g_l}[X_1] + c_r\rho_{g_r}[X_1]) + (c_l\rho_{\overline{g}_l}[X_2] + c_r\rho_{\overline{g}_r}[X_2]) \\
&= \rho_g[X_1] + \rho_{\overline{g}}[X_2], 
\end{align*}
since $\overline{g} = c_l\overline{g}_l + c_r\overline{g}_r$ and $\rho_{\overline{g}}[X_2] = c_l\rho_{\overline{g}_l}[X_2] + c_r\rho_{\overline{g}_r}[X_2]$. This completes the proof. 
\end{proof} 
\begin{cor} \label{0155}
Assume that $(X^{-}_1, X^{-}_2)$ is a pair of normal random variables such that $X_i \sim N(\mu_i, \sigma^2_i), i = 1, 2$ and let $g$ be a distortion function.\begin{itemize} 
\item If $\sigma_1 \geq \sigma_2$, then $\rho_g[S^{-}] = \rho_g[X_1] + \rho_{\overline{g}}[X_2]$.  
\item If $\sigma_1 < \sigma_2$, then $\rho_g[S^{-}] = \rho_{\overline{g}}[X_1] + \rho_g[X_2]$.  
\end{itemize}
\end{cor}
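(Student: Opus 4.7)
The plan is to derive Corollary~\ref{0155} as a direct application of Theorem~\ref{0265}, so the work amounts to verifying the three hypotheses (symmetry, continuous and strictly increasing distribution on the appropriate interval, and a dispersive order) for the normal case, and then swapping the roles of $X_1$ and $X_2$ to obtain the second bullet point.

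First, I would check that the hypotheses of Theorem~\ref{0265} hold automatically for any pair of normal variables. A normal distribution $N(\mu_i,\sigma_i^2)$ is symmetric around its mean $\mu_i$, and its cdf $F_{X_i}$ is continuous and strictly increasing on the whole real line, which is precisely $(F^{-1+}_{X_i}(0),F^{-1}_{X_i}(1)) = (-\infty,+\infty)$. So the only hypothesis that involves a case distinction is the dispersive order $X_2 \leq_{disp} X_1$, and this has already been settled in the example preceding Theorem~\ref{0265}: for normals, $X_2 \leq_{disp} X_1$ iff $\sigma_2 \leq \sigma_1$, and $X_1 \leq_{disp} X_2$ iff $\sigma_1 \leq \sigma_2$.

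For the first bullet ($\sigma_1 \geq \sigma_2$), I would simply invoke Theorem~\ref{0265} as stated: all hypotheses are satisfied, so $\rho_g[S^{-}] = \rho_g[X_1] + \rho_{\overline{g}}[X_2]$. For the second bullet ($\sigma_1 < \sigma_2$), the dispersive inequality flips, so Theorem~\ref{0265} no longer applies in the original labeling. The remedy is to relabel: since $S^{-} = X_1^{-} + X_2^{-} = X_2^{-} + X_1^{-}$ and the counter-monotonic structure is preserved under swapping the two components (the pair $(F^{-1}_{X_2}(U),F^{-1}_{X_1}(1-U))$ has the same distribution as $(F^{-1}_{X_2}(1-U'),F^{-1}_{X_1}(U'))$ with $U' = 1-U$ again uniform on $[0,1]$), I can apply Theorem~\ref{0265} to the pair $(X_2^{-},X_1^{-})$, whose dispersive hypothesis is $X_1 \leq_{disp} X_2$ and does hold. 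This yields $\rho_g[S^{-}] = \rho_g[X_2] + \rho_{\overline{g}}[X_1]$, which is the claimed identity in the second bullet.

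I do not expect any serious obstacle: the corollary is really just a packaging of Theorem~\ref{0265} in the normal setting. The only subtle point is the symmetry argument that allows one to swap the labels of the two marginals when $\sigma_1 < \sigma_2$; this should be stated explicitly so the reader sees why the dual distortion function attaches to $X_1$ rather than to $X_2$ in that case.
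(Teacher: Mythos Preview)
Your proposal is correct and matches the paper's approach: the paper states Corollary~\ref{0155} without proof, treating it as an immediate consequence of Theorem~\ref{0265} together with the preceding example on the dispersive order for normals. Your explicit verification of the hypotheses and your relabeling argument for the case $\sigma_1 < \sigma_2$ simply fill in the details the paper leaves to the reader.
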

\begin{cor} \label{0162}
Assume that $(X^{-}_1, X^{-}_2)$ is a pair of random variables such that $X_i \sim \text{Student}(\nu_i), i = 1, 2$ and let $g$ be a distortion function.
\begin{itemize} 
\item If $\nu_1 \leq \nu_2$, then $\rho_g[S^{-}] = \rho_g[X_1] + \rho_{\overline{g}}[X_2]$.  
\item If $\nu_2 < \nu_1$, then $\rho_g[S^{-}] = \rho_{\overline{g}}[X_1] + \rho_g[X_2]$.  
\end{itemize}
\end{cor}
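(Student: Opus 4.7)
The plan is to reduce the corollary to Theorem \ref{0265} by verifying its three hypotheses for Student-$t$ marginals (symmetry, continuity plus strict monotonicity of the c.d.f., and the appropriate dispersive order), and then reading off the conclusion from the theorem in the correct orientation.

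First I would record the easy structural facts about $X_i\sim\text{Student}(\nu_i)$. The Student-$t$ density is strictly positive and symmetric about $0$ on all of $\mathbb{R}$, so $F_{X_i}$ is continuous, strictly increasing on $\mathbb{R}=(F^{-1+}_{X_i}(0),F^{-1}_{X_i}(1))$, and $X_i$ is symmetric about its median $0$. This disposes of two of the three hypotheses of Theorem \ref{0265} simultaneously and covers both cases of the corollary.

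Next I would invoke the dispersive-order comparison for Student-$t$ variables from Example 2.8(1): for $0<\alpha<\beta$, one has $\text{Student}(\beta)\leq_{disp}\text{Student}(\alpha)$, i.e.\ the family is ordered by dispersion with heavier tails (smaller degrees of freedom) being more dispersed. In the first case of the corollary, $\nu_1\le\nu_2$ gives $X_2\leq_{disp}X_1$ (the case $\nu_1=\nu_2$ being trivial, since then $F_{X_1}^{-1}-F_{X_2}^{-1}\equiv 0$ is both non-decreasing and non-increasing), so Theorem \ref{0265} applies directly and yields
\[
\rho_g[S^{-}]=\rho_g[X_1]+\rho_{\overline{g}}[X_2].
\]
In the second case, $\nu_2<\nu_1$ gives $X_1\leq_{disp}X_2$; since $S^{-}=X_1^{-}+X_2^{-}$ is symmetric in its summands (the counter-monotonic coupling is exchangeable in distribution once we relabel $U\mapsto 1-U$, which is again uniform), I would apply Theorem \ref{0265} with the roles of $X_1$ and $X_2$ swapped to conclude
\[
\rho_g[S^{-}]=\rho_g[X_2]+\rho_{\overline{g}}[X_1].
\]

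The only potentially delicate step is the dispersive-order verification, but the excerpt cites it as Example 2.8(1) (with a reference to \cite{Jeon et al. (2006)}), so I would simply quote it rather than re-prove it. Otherwise the proof is essentially a specialization: check symmetry, check that the c.d.f.\ is strictly increasing on its support, apply the dispersive ordering, and invoke the main theorem. The exchange-of-roles argument in the second case is the one spot worth spelling out carefully, but it rests on nothing more than the distributional symmetry of the counter-monotonic sum under $U\leftrightarrow 1-U$.
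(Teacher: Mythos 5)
Your proposal is correct and matches the paper's intended argument: the corollary is stated without proof precisely because it is the immediate specialization of Theorem \ref{0265} to Student-$t$ marginals, using the dispersive ordering $X_2 \leq_{disp} X_1$ for $\nu_1 < \nu_2$ quoted from the example in Section 2 (and the role-swap via $U \mapsto 1-U$ in the second case). Your explicit handling of the boundary case $\nu_1 = \nu_2$ and of the exchangeability of the counter-monotonic sum is a welcome bit of extra care, but it is the same route.
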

In fact, by the well-known additivity property of distortion risk measures for comonotonic risks we are able to generalize Corollary \ref{0155} to the case that follows. 
\begin{cor} \label{0221}
Consider the sum $S = \sum^{m + n}_{i = 1}X_i$ where $X_i \sim N(\mu_i, \sigma_i^2), i = 1, 2, ..., m + n$. Assume that the random vector $(X_1, ..., X_{m + n})$ satisfies 
$$(X_1, ..., X_{m + n}) \stackrel{d}{=} (F_{X_1}^{-1}(U), ..., F_{X_m}^{-1}(U), F_{X_{m + 1}}^{-1}(1 - U), ..., F_{X_{m + n}}^{-1}(1 - U))$$ 
for a random variable $U \sim U(0, 1)$ and that $g$ is a distortion function. 
\begin{itemize} 
\item If $\sum^{m}_{i = 1}\sigma_i \geq \sum^{m + n}_{i = m + 1}\sigma_{i}$, then $\rho_g[S] = \sum^m_{i = 1}\rho_g[X_i] + \sum^{m + n}_{i = m + 1}\rho_{\overline{g}}[X_{i}]$. 
\item If $\sum^{m}_{i = 1}\sigma_i < \sum^{m + n}_{i = m + 1}\sigma_{i}$, then $\rho_g[S] = \sum^m_{i = 1}\rho_{\overline{g}}[X_i] + \sum^{m + n}_{i = m + 1}\rho_g[X_{i}]$. 
\end{itemize}
\end{cor}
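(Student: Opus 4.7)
The plan is to reduce Corollary \ref{0221} to Corollary \ref{0155} by aggregating the comonotonic groups. Set $S_1 = \sum_{i=1}^{m} X_i$ and $S_2 = \sum_{i=m+1}^{m+n} X_i$, so $S = S_1 + S_2$. By the assumed joint distribution, each $X_i$ with $i \le m$ equals $F_{X_i}^{-1}(U) = \mu_i + \sigma_i \Phi^{-1}(U)$, and each $X_i$ with $i > m$ equals $\mu_i + \sigma_i \Phi^{-1}(1 - U)$. Summing these affine functions I would obtain $S_1 = \bigl(\sum_{i=1}^{m}\mu_i\bigr) + \bigl(\sum_{i=1}^{m}\sigma_i\bigr)\Phi^{-1}(U)$ and similarly $S_2 = \bigl(\sum_{i=m+1}^{m+n}\mu_i\bigr) + \bigl(\sum_{i=m+1}^{m+n}\sigma_i\bigr)\Phi^{-1}(1 - U)$. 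Hence $S_1 \sim N\bigl(\sum_{i=1}^{m}\mu_i, (\sum_{i=1}^{m}\sigma_i)^2\bigr)$ and $S_2 \sim N\bigl(\sum_{i=m+1}^{m+n}\mu_i, (\sum_{i=m+1}^{m+n}\sigma_i)^2\bigr)$, with respective standard deviations $\sum_{i=1}^{m}\sigma_i$ and $\sum_{i=m+1}^{m+n}\sigma_i$.

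Next I would verify that the pair $(S_1, S_2)$ is itself counter-monotonic. Because $\Phi^{-1}$ is strictly increasing, $S_1$ can be written as $F_{S_1}^{-1}(U)$ and $S_2$ as $F_{S_2}^{-1}(1-U)$ for the same uniform $U$, which is precisely the defining representation of counter-monotonicity. Since $S_1$ and $S_2$ are normal, their distribution functions are continuous and strictly increasing on $\mathbb{R}$, so the hypotheses of Corollary \ref{0155} are satisfied for the pair $(S_1, S_2)$.

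Applying Corollary \ref{0155} to $(S_1, S_2)$: in the first case $\sum_{i=1}^{m}\sigma_i \ge \sum_{i=m+1}^{m+n}\sigma_i$, i.e.\ $\sigma(S_1) \ge \sigma(S_2)$, which yields $\rho_g[S] = \rho_g[S_1] + \rho_{\overline{g}}[S_2]$; in the second case the inequality is reversed and we get $\rho_g[S] = \rho_{\overline{g}}[S_1] + \rho_g[S_2]$. It remains to break up $\rho_g[S_1]$, $\rho_{\overline{g}}[S_2]$ (and their swapped versions in the second case) into sums over the individual marginals. The random variables $X_1,\dots,X_m$ are all non-decreasing functions of the single uniform $U$, hence comonotonic, so by the comonotonic additivity of distortion risk measures recalled in the introduction and in Section~2.4, $\rho_g[S_1] = \sum_{i=1}^{m}\rho_g[X_i]$. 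Likewise $X_{m+1},\dots,X_{m+n}$ are comonotonic (all non-decreasing functions of $1-U$), giving $\rho_{\overline{g}}[S_2] = \sum_{i=m+1}^{m+n}\rho_{\overline{g}}[X_i]$. Combining these in each case delivers the two claimed identities.

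The only non-routine step is the verification that the aggregated pair $(S_1, S_2)$ is counter-monotonic in the sense required by Theorem \ref{0265}; once the affine form of $S_1, S_2$ as functions of $\Phi^{-1}(U)$ and $\Phi^{-1}(1-U)$ is written out, this is immediate, so no real obstacle is expected. The decomposition is essentially a two-layer application: comonotonic additivity inside each group, counter-monotonic additivity (Corollary \ref{0155}) between the two aggregated normals.
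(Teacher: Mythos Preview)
Your proposal is correct and follows essentially the same route as the paper: aggregate each comonotonic block into a single normal variable, observe that the resulting pair $(S_1,S_2)$ is counter-monotonic with the right standard deviations, apply the main result (the paper cites Theorem~\ref{0265} directly while you invoke its normal specialisation Corollary~\ref{0155}, which is equivalent here), and then split each $\rho_g[S_j]$ via comonotonic additivity. The only cosmetic looseness is writing ``$X_i = \mu_i + \sigma_i\Phi^{-1}(U)$'' instead of equality in distribution; since the hypothesis gives the full joint law of $(X_1,\dots,X_{m+n})$, the joint law of $(S_1,S_2)$ is determined and your counter-monotonicity check goes through.
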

\begin{proof}
Let $S_1 = \sum^m_{i = 1}F^{-1}_{X_i}(U)$ and $S_2 = \sum^{m + n}_{i = m + 1}F^{-1}_{X_{i}}(1 - U)$. Then it is easy to check that $S_1$ and $S_2$ are comonotonic sums and that $$S_1 \sim N\bigg(\sum^{m}_{i = 1}\mu_i, \Big(\sum^m_{i = 1}\sigma_i\Big)^2\bigg),\,\, S_2 \sim N\bigg(\sum^{m + n}_{i = m + 1}\mu_i, \Big(\sum^{m + n}_{i = m + 1}\sigma_i\Big)^2\bigg).$$ Moreover, we have $S \stackrel{d}{=} S_1 + S_2$. By the well-known additivity property of distortion risk measures for comonotonic risks, we have $$\rho_g[S_1] = \sum^m_{i = 1}\rho_g[X_i], \,\, \rho_{\overline{g}}[S_1] = \sum^m_{i = 1}\rho_{\overline{g}}[X_i]$$ 
and $$\rho_g[S_2] = \sum^{m + n}_{i = m + 1}\rho_g[X_i], \,\, \rho_{\overline{g}}[S_2] = \sum^{m + n}_{i = m + 1}\rho_{\overline{g}}[X_i].$$
Therefore, Corollary \ref{0221} follows form Theorem \ref{0265}. 
\end{proof}
\section{Applications}
It is well-known that distortion risk measures include as special cases a number of important risk measures, such as the VaR and TVaR risk measures. So, by using Theorem \ref{0265}, we can obtain a number of corollaries as follows. We remark that similar results are obtained in Proposition 1 and Proposition 2 in \cite{Chaoubi et al. (2020)} using a different method. 

\begin{cor} \label{0330}
Assume that $(X^{-}_1, X^{-}_2)$ is a pair of symmetric random variables such that their distribution functions $F_{X_i}(x), i = 1, 2$, are continuous and strictly increasing on $(F^{-1+}_{X_{i}}(0), F^{-1}_{X_i}(1))$.  If $X_2 \leq_{disp} X_1$, then  for any $p \in (0, 1)$ $$\text{VaR}_p[S^{-}] = \text{VaR}_p[X_1] + \text{VaR}_{1 - p}[X_2].$$
\end{cor}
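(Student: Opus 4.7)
The plan is to apply Theorem \ref{0265} directly with the specific distortion function that induces VaR, and then verify that the dual distortion function on the marginal $X_2$ produces exactly $\text{VaR}_{1-p}[X_2]$. Because the heavy lifting was done in Theorem \ref{0265}, this corollary is essentially a matter of computing the dual and checking that boundary-point ambiguities are absorbed by the continuity hypothesis on $F_{X_2}$.

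First I would set $g(q) = \mathbb{I}(q > 1 - p)$, which by Example \ref{0080} is a left-continuous distortion function satisfying $\rho_g[X] = \text{VaR}_p[X]$ for every random variable $X$. Since the hypotheses of Corollary \ref{0330} are exactly those of Theorem \ref{0265}, invoking that theorem yields
\[
\text{VaR}_p[S^{-}] \;=\; \rho_g[S^{-}] \;=\; \rho_g[X_1] + \rho_{\overline{g}}[X_2] \;=\; \text{VaR}_p[X_1] + \rho_{\overline{g}}[X_2],
\]
so what remains is to identify $\rho_{\overline{g}}[X_2]$ with $\text{VaR}_{1-p}[X_2]$.

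Next I would compute the dual distortion function via $\overline{g}(q) = 1 - g(1 - q)$, which gives $\overline{g}(q) = 1 - \mathbb{I}(1 - q > 1 - p) = \mathbb{I}(q \geq p)$. This is right-continuous, so by the Lebesgue-Stieltjes integral representation of Theorem 4 in \cite{Dhaene et al. (2012)},
\[
\rho_{\overline{g}}[X_2] \;=\; \int_{[0,1]} F^{-1+}_{X_2}(1 - q)\, d\overline{g}(q).
\]
Since $\overline{g}$ is a unit jump at $q = p$, the integral collapses to $F^{-1+}_{X_2}(1 - p)$. The continuity and strict monotonicity of $F_{X_2}$ on $(F^{-1+}_{X_2}(0), F^{-1}_{X_2}(1))$ force $F^{-1+}_{X_2}(1 - p) = F^{-1}_{X_2}(1 - p) = \text{VaR}_{1-p}[X_2]$, which closes the argument.

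The only delicate point is the boundary behavior at $q = p$: the VaR distortion function at level $1 - p$ is $\mathbb{I}(q > p)$, which differs from $\overline{g}(q) = \mathbb{I}(q \geq p)$ precisely at $q = p$. Without continuity this discrepancy would produce a gap between the left and right quantiles of $X_2$ at $1 - p$, and the identity $\rho_{\overline{g}}[X_2] = \text{VaR}_{1-p}[X_2]$ would fail in general. This is exactly where the continuity hypothesis on $F_{X_2}$ intervenes, making it the one place in the argument where the assumptions of the theorem are truly needed beyond simply quoting Theorem \ref{0265}.
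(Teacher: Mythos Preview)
Your proof is correct and follows essentially the same approach as the paper: choose $g(q) = \mathbb{I}(q > 1 - p)$, apply Theorem \ref{0265}, then identify $\rho_{\overline{g}}[X_2]$ via the Lebesgue--Stieltjes representation from Theorem 4 of \cite{Dhaene et al. (2012)} and use the strict monotonicity of $F_{X_2}$ to collapse $F^{-1+}_{X_2}(1-p)$ to $F^{-1}_{X_2}(1-p)$. Your explicit discussion of the discrepancy between $\overline{g}(q) = \mathbb{I}(q \geq p)$ and the VaR$_{1-p}$ distortion $\mathbb{I}(q > p)$ is a welcome addition that the paper leaves implicit.
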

\begin{proof}
For any fixed $p \in (0, 1)$, we consider the function $g$ defined by 
$$g(q) = \mathbb{I}(q > 1 - p), \,\, q \in [0, 1].$$ 
It is easy to check that $g$ is a left-continuous distortion function on $[0, 1]$. Therefore, by Theorem 6 in \cite{Dhaene et al. (2012)}, for any random variable $X$ the distortion risk measure $\rho_{g}[X]$ has the following Lebesgue-Stieltjes integral representation: $$\rho_{g}[X] = \int_{[0, 1]}F^{-1}_{X}(1 - q) dg(q).$$ 
Since $F^{-1}_X(q)$ is a left-continuous function on $(0, 1)$, it follows that 
$$ 
\rho_g[X] = \lim_{q \to (1 - p)^{+}}F^{-1}_{X}(1 - q)g(q) - \lim_{q \to (1 - p)^{-}}F^{-1}_{X}(1 - q)g(q) = \text{VaR}_p[X].
$$ Let $\overline{g}(x) = 1 - g(1 - x), x \in [0, 1]$, be the dual distortion function of $g$. Then it is easy to check that $\overline{g}$ is a right-continuous distortion function on $[0, 1]$. By Theorem 4 in \cite{Dhaene et al. (2012)}, for each $X_i, i = 1, 2$, the distortion risk measure $\rho_{\overline{g}}[X_i]$ has the following Lebesgue-Stieltjes integral representation: $$\rho_{\overline{g}}[X_i] = \int_{[0, 1]}F^{-1+}_{X_i}(1 - q) d\overline{g}(q).$$ Since $F_{X_i}(x)$ is strictly increasing on $(F^{-1+}_{X_{i}}(0), F^{-1}_{X_i}(1))$, we have $$F^{-1+}_{X_i}(1 - q) = F^{-1}_{X_i}(1 - q), \,\, q \in (0, 1),$$ and it follows that $$
\rho_{\overline{g}}[X_i] = \lim_{q \to p^{+}}F^{-1}_{X_i}(1 - q)\overline{g}(q) - \lim_{q \to p^{-}}F^{-1}_{X_i}(1 - q)\overline{g}(q) = \text{VaR}_{1 - p}[X_i], \,\, i = 1, 2.
$$
Now we can see that Corollary \ref{0330} immediately follows from Theorem \ref{0265}.  
\end{proof}

\begin{cor} \label{0338}
Assume that $(X^{-}_1, X^{-}_2)$ is a pair of symmetric random variables such that their distribution functions $F_{X_i}(x), i = 1, 2$, are continuous and strictly increasing on $(F^{-1+}_{X_{i}}(0), F^{-1}_{X_i}(1))$.  If $X_2 \leq_{disp} X_1$, then for any $p \in (0, 1)$ $$\text{TVaR}_p[S^{-}] = \text{TVaR}_p[X_1] + \text{LTVaR}_{1 - p}[X_2].$$  
\end{cor}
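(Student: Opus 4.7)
The approach mirrors the proof of Corollary \ref{0330} line by line, the only change being a different choice of distortion function. The plan is to identify the distortion function $g$ whose associated risk measure is $\text{TVaR}_p$, compute its dual $\overline{g}$, show that $\rho_{\overline{g}}[X_2]$ coincides with $\text{LTVaR}_{1-p}[X_2]$, and then invoke Theorem \ref{0265} directly.

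First, following Example \ref{0088}, I would take $g(q) = \min\{q/(1-p),\, 1\}$ for $q \in [0,1]$. This is a continuous (hence both left- and right-continuous) distortion function, and by Theorem 6 in \cite{Dhaene et al. (2012)} together with the computation already carried out in Example \ref{0088}, we have $\rho_g[X] = \text{TVaR}_p[X]$ for any random variable $X$. In particular $\rho_g[X_1] = \text{TVaR}_p[X_1]$.

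Second, I would compute the dual distortion function $\overline{g}(q) = 1 - g(1-q)$ by a routine case split on whether $1-q \geq 1-p$ or not, obtaining
\[
\overline{g}(q) = \max\Big\{\tfrac{q-p}{1-p},\, 0\Big\}, \quad q \in [0,1],
\]
which vanishes on $[0,p]$ and rises linearly to $1$ on $[p,1]$; this is again a continuous distortion function. By Theorem 6 in \cite{Dhaene et al. (2012)} we have the Lebesgue--Stieltjes representation $\rho_{\overline{g}}[X_2] = \int_{[0,1]} F^{-1}_{X_2}(1-q)\, d\overline{g}(q)$, and since the Stieltjes measure $d\overline{g}$ is supported on $[p,1]$ with constant density $1/(1-p)$, the substitution $u = 1-q$ yields
\[
\rho_{\overline{g}}[X_2] = \frac{1}{1-p}\int_0^{1-p} F^{-1}_{X_2}(u)\, du = \text{LTVaR}_{1-p}[X_2].
\]

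Finally, the hypotheses of Corollary \ref{0338} are precisely those of Theorem \ref{0265} (symmetry, continuity, strict monotonicity of the $F_{X_i}$ on their support, and $X_2 \leq_{disp} X_1$), so Theorem \ref{0265} delivers $\rho_g[S^{-}] = \rho_g[X_1] + \rho_{\overline{g}}[X_2]$, which by the two identifications above equals $\text{TVaR}_p[X_1] + \text{LTVaR}_{1-p}[X_2]$. There is no real obstacle here since the heavy lifting is done by Theorem \ref{0265}; the only care needed is in the routine computation of $\overline{g}$ and in the change-of-variable verifying that the dual distortion of the TVaR distortion represents the lower TVaR at complementary level $1-p$.
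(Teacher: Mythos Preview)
Your proposal is correct and follows essentially the same approach as the paper: both choose the distortion function $g(q) = \min\{q/(1-p),\,1\}$, identify $\rho_g$ with $\text{TVaR}_p$ via Theorem 6 in \cite{Dhaene et al. (2012)}, compute the dual distortion and recognize $\rho_{\overline{g}}$ as $\text{LTVaR}_{1-p}$ through the same change of variable, and then apply Theorem \ref{0265}. The only cosmetic difference is that you write out $\overline{g}(q) = \max\{(q-p)/(1-p),\,0\}$ explicitly, whereas the paper works directly from $\overline{g}(q) = 1 - g(1-q)$ inside the integral.
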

\begin{proof}
For any fixed $p \in (0, 1)$, we define the function $g$ by $$g(q) = \min\Big\{\frac{q}{1 - p}, 1\Big\}, \,\, q \in [0, 1].$$ It is easy to check that $g(q)$ is a continuous distortion function on $[0, 1]$. Therefore, $\overline{g}(q) = 1 - g(1 - q)$, the dual distortion function of $g$, is continuous on $[0, 1]$ as well. By Theorem 6 in \cite{Dhaene et al. (2012)}, for any random variable $X$ we have 
\begin{align*} 
\rho_g[X] &= \int_0^1\text{VaR}_{1 - q}[X]dg(q) \\
&= \frac{1}{1 - p}\int_0^{1 - p}\text{VaR}_{1 - q}[X]dq \\ 
&= \frac{1}{1 - p}\int_p^1\text{VaR}_q[X]dq \\ 
&= \text{TVaR}_{p}[X], \,\, p \in (0, 1), 
\end{align*} 
and 
\begin{align*} 
\rho_{\overline{g}}[X] &= \int_0^1\text{VaR}_{1 - q}[X]d\overline{g}(q) \\ 
&= \frac{1}{1 - p}\int_p^1 \text{VaR}_{1 - q}[X]dq \\ 
&= \frac{1}{1 - p}\int_0^{1 - p}\text{VaR}_q[X]dq \\ 
&= \text{LTVaR}_{1 - p}[X], \,\, p \in (0, 1). 
\end{align*}
So, Corollary \ref{0338} also follows from Theorem \ref{0265}. 
\end{proof}

\begin{cor} \label{0346}
Assume that $(X^{-}_1, X^{-}_2)$ is a pair of symmetric random variables such that their distribution functions $F_{X_i}(x), i = 1, 2$, are continuous and strictly increasing on $(F^{-1+}_{X_{i}}(0), F^{-1}_{X_i}(1))$.  If $X_2 \leq_{disp} X_1$, then for any $p \in (0, 1)$ $$\text{WT}_p[S^{-}] = \text{WT}_p[X_1] + \text{WT}_{1 - p}[X_2].$$
\end{cor}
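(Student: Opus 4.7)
The plan is to reduce Corollary \ref{0346} to a direct application of Theorem \ref{0265} by identifying the Wang Transform at level $p$ as the distortion risk measure associated with a specific distortion function, and then verifying that its dual distortion function corresponds precisely to the Wang Transform at level $1-p$.

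First, I would fix $p \in (0, 1)$ and let $g(q) = \Phi[\Phi^{-1}(q) + \Phi^{-1}(p)]$ for $q \in [0, 1]$, which is the Wang Transform distortion function at level $p$ already recalled in the paper. Since $\Phi$ and $\Phi^{-1}$ are continuous and strictly increasing on the appropriate domains, $g$ is a continuous distortion function with $g(0) = 0$ and $g(1) = 1$, so by definition $\rho_g[X] = \text{WT}_p[X]$ for any random variable $X$.

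Next, the key computation is to identify the dual distortion function $\overline{g}(q) = 1 - g(1-q)$. Using the elementary identities $\Phi^{-1}(1 - q) = -\Phi^{-1}(q)$ and $1 - \Phi(x) = \Phi(-x)$, I would compute
\begin{align*}
\overline{g}(q) &= 1 - \Phi\bigl[\Phi^{-1}(1 - q) + \Phi^{-1}(p)\bigr] \\
&= 1 - \Phi\bigl[-\Phi^{-1}(q) + \Phi^{-1}(p)\bigr] \\
&= \Phi\bigl[\Phi^{-1}(q) - \Phi^{-1}(p)\bigr] \\
&= \Phi\bigl[\Phi^{-1}(q) + \Phi^{-1}(1 - p)\bigr],
\end{align*}
which is precisely the Wang Transform distortion function at level $1 - p$. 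Consequently, $\rho_{\overline{g}}[X] = \text{WT}_{1 - p}[X]$ for any random variable $X$, in particular for $X = X_2$.

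Finally, since the hypotheses of Corollary \ref{0346} on $(X_1^{-}, X_2^{-})$ match exactly those of Theorem \ref{0265}, I would apply Theorem \ref{0265} with the above distortion function $g$ to obtain
\[
\text{WT}_p[S^{-}] = \rho_g[S^{-}] = \rho_g[X_1] + \rho_{\overline{g}}[X_2] = \text{WT}_p[X_1] + \text{WT}_{1 - p}[X_2],
\]
which is the desired identity. There is no substantive obstacle here: the proof is essentially a bookkeeping exercise, and the only slightly delicate step is the algebraic manipulation showing that the dual of the Wang distortion at level $p$ is the Wang distortion at level $1 - p$, which rests entirely on the symmetry of $\Phi$ around the origin.
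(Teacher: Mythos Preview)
Your proof is correct and follows essentially the same approach as the paper: both identify the dual of the Wang distortion at level $p$ as the Wang distortion at level $1-p$ via the symmetry identities $\Phi^{-1}(1-q)=-\Phi^{-1}(q)$ and $1-\Phi(x)=\Phi(-x)$, and then apply Theorem~\ref{0265}. The only cosmetic difference is that you compute $\overline{g}(q)$ directly as a function, whereas the paper carries out the same algebra inside the Lebesgue--Stieltjes integral representation of $\rho_{\overline{g}}[X]$; the content is identical.
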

\begin{proof}
For any fixed $p \in (0, 1)$, we define the distortion function $$g(q) = \Phi[\Phi^{-1}(q) + \Phi^{-1}(p)], \quad q \in [0, 1]$$ which is usually called the Wang Transform (distortion function) at level $p$ and the corresponding distortion risk measure $\rho_g[X]$ is called the Wang Transform risk measure, denoted by $\text{WT}_p[X]$, see Example 5.1.1 in \cite{Dhaene et al. (2006)}. It is easy to check that the function $g(q)$ is continuous on $[0, 1]$. This implies that the dual distortion function $\overline{g}(q)$ is also continuous on $[0, 1]$. Therefore, for any random variable $X$, by the Lebesgue-Stieltjes integral representation of $\rho_{\overline{g}}[X]$,  see Theorem 6 in \cite{Dhaene et al. (2012)}, we have  
\begin{align*} 
\rho_{\overline{g}}[X] &= \int_0^1 \text{VaR}_{1 - q}[X]d\overline{g}(q) \\ 
&= \int_0^1 \text{VaR}_{1 - q}[X]d(1 - \Phi[\Phi^{-1}(1 - q) + \Phi^{-1}(p)]) \\
&= \int_0^1 \text{VaR}_{1 - q}[X]d\Phi(\Phi^{-1}(q) - \Phi^{-1}(p)) \\ 
&= \int_0^1 \text{VaR}_{1 - q}[X]d\Phi(\Phi^{-1}(q) + \Phi^{-1}(1 - p)) \\
&= \text{WT}_{1 - p}[X]
\end{align*}
So, Corollary \ref{0346} immediately follows from Theorem \ref{0265}. 
\end{proof}
\begin{ex} 
Assume that $(X^{-}_1, X^{-}_2)$ is a pair of normal random variables such that $X_i \sim N(\mu_i, \sigma^2_i), i = 1, 2$ and let $p \in (0, 1)$ be a real number. If $\sigma_1 \geq \sigma_2$, then 
\begin{itemize} 
\item $\text{VaR}_p[S^{-}] = \text{VaR}_p[X_1] + \text{VaR}_{1 - p}[X_2]$; 
\item $\text{TVaR}_p[S^{-}] = \text{TVaR}_p[X_1] + \text{LTVaR}_{1 - p}[X_2]$; 
\item $\text{WT}_p[S^{-}] = \text{WT}_p[X_1] + \text{WT}_{1 - p}[X_2]$. 
\end{itemize}
If $\sigma_1 < \sigma_2$, then  
\begin{itemize} 
\item $\text{VaR}_p[S^{-}] = \text{VaR}_{1 - p}[X_1] + \text{VaR}_{p}[X_2]$; 
\item $\text{TVaR}_p[S^{-}] = \text{LTVaR}_{1 - p}[X_1] + \text{TVaR}_{p}[X_2]$; 
\item $\text{WT}_p[S^{-}] = \text{WT}_{1 - p}[X_1] + \text{WT}_{p}[X_2]$. 
\end{itemize}

\end{ex}
\begin{ex} \label{0354}
Assume that $(X^{-}_1, X^{-}_2)$ is a pair of random variables such that $X_i \sim \text{Student}(\nu_i), i = 1, 2$. Let $p \in (0, 1)$ be a real number. 
If $\nu_1 \leq \nu_2$, then 
\begin{itemize} 
\item $\text{VaR}_p[S^{-}] = \text{VaR}_p[X_1] + \text{VaR}_{1 - p}[X_2]$; 
\item $\text{TVaR}_p[S^{-}] = \text{TVaR}_p[X_1] + \text{LTVaR}_{1 - p}[X_2]$; 
\item $\text{WT}_p[S^{-}] = \text{WT}_p[X_1] + \text{WT}_{1 - p}[X_2]$. 
\end{itemize}
If $\nu_1 > \nu_2$, then  
\begin{itemize} 
\item $\text{VaR}_p[S^{-}] = \text{VaR}_{1 - p}[X_1] + \text{VaR}_{p}[X_2]$; 
\item $\text{TVaR}_p[S^{-}] = \text{LTVaR}_{1 - p}[X_1] + \text{TVaR}_{p}[X_2]$; 
\item $\text{WT}_p[S^{-}] = \text{WT}_{1 - p}[X_1] + \text{WT}_{p}[X_2]$. 
\end{itemize}
\end{ex}

\section{Further discussions}
In this section, we want to study distortion risk measures, including VaR's and TVaR's in particular, of the sum of two counter-monotonic variables in the case where the marginals under consideration are assumed to follow log-normal distributions. We will show that, even in this concrete case, distortion risk measures of the counter-monotonic sum $S^{-}$ are generally not easy to calculate. The main reason is that the aggregate loss function $\varphi^{-}(u)$ is no longer monotonic on the whole interval $(0, 1)$ but first strictly decreases and then strictly increases displaying a U-shape behavior. 
\begin{thm} \label{0272}
Assume that $(X^{-}_1, X^{-}_2)$ is a pair of identically distributed log-normal random variables. Then, for any left-continuous distortion function $g$, we have $$\rho_{g}[S^{-}] = \int^1_0 (F^{-1}_{X_1}({q}/{2}) + F^{-1}_{X_1}(1 - {q}/{2}))dg(q).$$
\end{thm}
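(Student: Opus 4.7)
The plan is to compute the quantile function of $S^{-}=\varphi^{-}(U)$ by analyzing the shape of the aggregate loss function $\varphi^{-}$, and then to apply the Lebesgue--Stieltjes integral representation (Theorem 6 in \cite{Dhaene et al. (2012)}) available for left-continuous distortion functions.

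First, I would exploit the hypothesis $X_{1}\stackrel{d}{=}X_{2}$ to write
\[
\varphi^{-}(p)=F^{-1}_{X_{1}}(p)+F^{-1}_{X_{1}}(1-p),
\]
which is automatically symmetric about $p=1/2$. Using the closed-form log-normal quantile $F^{-1}_{X_{1}}(p)=e^{\mu+\sigma\Phi^{-1}(p)}$ (for the shared log-normal parameters $\mu,\sigma$), this becomes $\varphi^{-}(p)=2e^{\mu}\cosh(\sigma\Phi^{-1}(p))$. A direct differentiation shows that $\varphi^{-}$ is strictly decreasing on $(0,1/2)$ and strictly increasing on $(1/2,1)$, with minimum value $2e^{\mu}$ attained at $p=1/2$. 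Thus $\varphi^{-}$ is U-shaped and symmetric about $1/2$, exactly the phenomenon flagged in the opening paragraph of Section 5.

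Second, I would use this U-shape to compute $F_{S^{-}}$. For any $s>2e^{\mu}$, the level set $\{u\in(0,1):\varphi^{-}(u)\le s\}$ is, by symmetry, an interval of the form $[p_{1}(s),\,1-p_{1}(s)]$ centered at $1/2$, where $p_{1}(s)\in(0,1/2)$ is the unique solution of $\varphi^{-}(p_{1})=s$ on the decreasing branch. Hence $F_{S^{-}}(s)=1-2p_{1}(s)$, which upon inversion gives
\[
F^{-1}_{S^{-}}(r)=F^{-1}_{X_{1}}\!\Bigl(\tfrac{1-r}{2}\Bigr)+F^{-1}_{X_{1}}\!\Bigl(\tfrac{1+r}{2}\Bigr),\qquad r\in(0,1).
\]
Strict monotonicity and continuity of $F^{-1}_{X_{1}}$ on $(0,1)$ ensure that $p_{1}(s)$ is well-defined and that this formula is valid throughout $(0,1)$.

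Finally, applying Theorem 6 in \cite{Dhaene et al. (2012)} to the left-continuous distortion function $g$ yields $\rho_{g}[S^{-}]=\int_{[0,1]}F^{-1}_{S^{-}}(1-q)\,dg(q)$, and the substitution $r=1-q$ in the previous display gives
\[
F^{-1}_{S^{-}}(1-q)=F^{-1}_{X_{1}}(q/2)+F^{-1}_{X_{1}}(1-q/2),
\]
which is exactly the claimed identity. The main obstacle is the second step: because $\varphi^{-}$ is no longer monotone, the clean identity $F^{-1}_{S^{-}}(p)=\varphi^{-}(p)$ from the proof of Theorem \ref{0265} breaks down, and one has to exploit the symmetry $\varphi^{-}(p)=\varphi^{-}(1-p)$ forced by $X_{1}\stackrel{d}{=}X_{2}$ in order to fold the two branches of $\varphi^{-}$ into a single closed-form expression for the quantile of $S^{-}$.
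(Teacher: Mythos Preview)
Your proposal is correct and follows essentially the same approach as the paper: show that $\varphi^{-}$ is symmetric about $1/2$ and U-shaped, derive the quantile identity $F^{-1}_{S^{-}}(1-q)=F^{-1}_{X_{1}}(q/2)+F^{-1}_{X_{1}}(1-q/2)$, and then invoke the Lebesgue--Stieltjes representation from Theorem~6 in \cite{Dhaene et al. (2012)}. The only difference is that the paper outsources the quantile computation to Proposition~2 in \cite{Chaoubi et al. (2020)}, whereas you carry out the level-set argument explicitly via the closed form $\varphi^{-}(p)=2e^{\mu}\cosh(\sigma\Phi^{-1}(p))$; this makes your argument self-contained but is not a different method.
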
 
\begin{proof}
For any left-continuous distortion function $g$, by Theorem 6 in \cite{Dhaene et al. (2012)}, we know that the distortion risk measure $\rho_{g}[S^{-}]$ has the following Lebesgue-Stieltjes integral representation: $$\rho_{g}[S^{-}] = \int^1_{0}F^{-1}_{S^{-}}(1 - q) dg(q).$$ So, it suffices to show that $$F^{-1}_{S^{-}}(1 - q) = F^{-1}_{X_1}(q/2) + F^{-1}_{X_1}(1 - q/2)$$
for any $q \in (0, 1)$, which is, however, fairly easy to check if we are concerned with Proposition 2 in \cite{Chaoubi et al. (2020)}. Indeed, the aggregate loss function becomes $$\varphi^{-}(u) = F^{-1}_{X_1}(u) + F^{-1}_{X_1}(1 - u), \,\, u \in(0, 1)$$ 
which is, obviously, symmetric around $u_0 = 1/2$. Moreover, easy computations show that $\varphi^{-}(u)$ is strictly decreasing on $(0, 1/2)$ and strictly increasing on $(1/2, 1)$, so that it reaches its minimum value at the point $u_0 = 1/2$. Now, Theorem \ref{0272} immediately follows from Proposition 2 in \cite{Chaoubi et al. (2020)}. We remark that for Proposition 2 in \cite{Chaoubi et al. (2020)} to hold the function $\varphi^{-}(u)$ does not need to be convex and the monotonicity will be sufficient. 
\end{proof}

\begin{cor} \label{0280}
Assume that $(X^{-}_1, X^{-}_2)$ is a pair of identically distributed log-normal random variables such that $X_1 \stackrel{d}{=} X_2 \sim LN(\mu, \sigma^2)$. Then, for any $p \in (0, 1)$, we have 
\begin{equation} \label{0282}
\text{VaR}_p[S^{-}] = e^{\mu + \sigma\Phi^{-1}(\frac{1 - p}{2})} + e^{\mu + \sigma\Phi^{-1}(\frac{1 + p}{2})}
\end{equation} and 
\begin{equation} \label{0285}
\text{TVaR}_p[S^{-}] = \frac{2e^{\mu + \frac{1}{2}\sigma^2}}{1 - p}\Big[\Phi\big(\Phi^{-1}(\frac{1 - p}{2}) - \sigma\big) + \Phi\big(\Phi^{-1}(\frac{1 - p}{2}) + \sigma\big)\Big].
\end{equation}
\end{cor}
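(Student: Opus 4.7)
The plan is to apply Theorem~\ref{0272} directly, choosing the two distortion functions that encode $\mathrm{VaR}_p$ and $\mathrm{TVaR}_p$ (which are both left-continuous, as recalled in Examples~\ref{0080} and~\ref{0088}), and then carry out the resulting one-dimensional integrals using the explicit quantile $F^{-1}_{X_1}(t)=e^{\mu+\sigma\Phi^{-1}(t)}$.

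For \eqref{0282}, I would take $g(q)=\mathbb{I}(q>1-p)$. This $g$ is a left-continuous distortion function whose Lebesgue--Stieltjes measure $dg$ is a unit point mass at $q=1-p$. Applying Theorem~\ref{0272} and evaluating the integrand at $q=1-p$ yields
\[
\mathrm{VaR}_p[S^{-}]=F^{-1}_{X_1}\!\bigl(\tfrac{1-p}{2}\bigr)+F^{-1}_{X_1}\!\bigl(1-\tfrac{1-p}{2}\bigr)
=F^{-1}_{X_1}\!\bigl(\tfrac{1-p}{2}\bigr)+F^{-1}_{X_1}\!\bigl(\tfrac{1+p}{2}\bigr),
\]
and plugging in the log-normal quantile gives \eqref{0282}. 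This step is essentially immediate.

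For \eqref{0285}, I would take $g(q)=\min\{q/(1-p),1\}$, which is continuous (hence left-continuous) and absolutely continuous with density $1/(1-p)$ on $[0,1-p]$. Theorem~\ref{0272} then gives
\[
\mathrm{TVaR}_p[S^{-}]=\frac{1}{1-p}\int_0^{1-p}\bigl(F^{-1}_{X_1}(q/2)+F^{-1}_{X_1}(1-q/2)\bigr)\,dq.
\]
The main (routine) work is to evaluate these two integrals. In the first, substitute $u=q/2$ and then $v=\Phi^{-1}(u)$, so that $du=\phi(v)\,dv$; the integrand becomes $2e^{\mu}e^{\sigma v}\phi(v)$, and completing the square $\sigma v-v^{2}/2=\sigma^{2}/2-(v-\sigma)^{2}/2$ produces a Gaussian density in $v-\sigma$ whose integral up to $\Phi^{-1}((1-p)/2)$ equals $\Phi(\Phi^{-1}((1-p)/2)-\sigma)$. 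The second integral is handled by the substitution $w=1-q/2$ and $v=\Phi^{-1}(w)$, yielding $2e^{\mu+\sigma^{2}/2}\,[1-\Phi(\Phi^{-1}((1+p)/2)-\sigma)]$, which after using the symmetry $\Phi^{-1}(1-x)=-\Phi^{-1}(x)$ equals $2e^{\mu+\sigma^{2}/2}\,\Phi(\Phi^{-1}((1-p)/2)+\sigma)$. Summing, dividing by $1-p$, and factoring out $2e^{\mu+\sigma^{2}/2}$ gives exactly \eqref{0285}.

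There is no conceptual obstacle here: once Theorem~\ref{0272} is in hand, the argument reduces to picking the right distortion functions and performing the usual Gaussian change-of-variables. The only place one must be careful is the bookkeeping between the arguments $(1-p)/2$ and $(1+p)/2$, where the identity $\Phi^{-1}(1-x)=-\Phi^{-1}(x)$ is what consolidates the two boundary terms into the symmetric pair $\Phi(\Phi^{-1}((1-p)/2)\pm\sigma)$ appearing in \eqref{0285}.
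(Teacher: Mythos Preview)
Your proposal is correct and follows essentially the same route as the paper's own proof: apply Theorem~\ref{0272} with the distortion functions $g(q)=\mathbb{I}(q>1-p)$ and $g(q)=\min\{q/(1-p),1\}$, then evaluate the resulting integrals via the log-normal quantile and a Gaussian change of variables with completion of the square. The only cosmetic difference is that the paper invokes the symmetry $\Phi^{-1}(1-q/2)=-\Phi^{-1}(q/2)$ \emph{before} substituting (combining both terms into $e^{\sigma t}+e^{-\sigma t}$ under a single substitution $t=\Phi^{-1}(q/2)$), whereas you treat the two pieces separately and apply the symmetry at the end.
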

\begin{proof} 
Corollary \ref{0280} follows from Theorem \ref{0272}. In fact, 
for any fixed $p \in (0, 1)$, we consider the function $g$ defined by $$g(q) = \mathbb{I}(q > 1 - p), \,\, q \in [0, 1].$$ Then we have $ \rho_g[S^{-}] = \text{VaR}_p[S^{-}].$ Moreover, by Theorem \ref{0272} we find that 
\begin{align*} 
\rho_g[S^{-}] &= \int^1_0(F^{-1}_{X_1}(\frac{q}{2}) + F^{-1}_{X_1}(1 - \frac{q}{2}))dg(q) \\ 
&= F^{-1}_{X_1}(\frac{1 - p}{2}) + F^{-1}_{X_1}(\frac{1 + p}{2}) \\ 
&= e^{\mu + \sigma\Phi^{-1}(\frac{1 - p}{2})} + e^{\mu + \sigma\Phi^{-1}(\frac{1 + p}{2})}
\end{align*}
from which equation $($\ref{0282}$)$ follows. To show the identity regarding TVaR's, for any fixed $p \in (0, 1)$, we consider the function $g$ defined by $$g(q) = \min\Big\{\frac{q}{1 - p}, 1\Big\}, \,\, q \in [0, 1].$$ Then we have $\rho_g[S^{-}] = \text{TVaR}_{p}[S^{-}].$ Moreover, by Theorem \ref{0272} we find that 
\begin{align*} 
\rho_g[S^{-}] &= \int^1_0(F^{-1}_{X_1}(\frac{q}{2}) + F^{-1}_{X_1}(1 - \frac{q}{2}))dg(q) \\ 
&= \frac{1}{1 - p}\int^{1 - p}_0(F^{-1}_{X_1}(\frac{q}{2}) + F^{-1}_{X_1}(1 - \frac{q}{2}))dq \\ 
&= \frac{e^{\mu}}{1 - p}\int^{1 - p}_0 [e^{\sigma\Phi^{-1}(\frac{q}{2})} + e^{-\sigma\Phi^{-1}(\frac{q}{2})}]dq \\ 
&= \frac{2e^{\mu}}{1 - p} \int^{\Phi^{-1}(\frac{1 - p}{2})}_{-\infty}(e^{\sigma t} + e^{-\sigma t})\Phi^{\, '}(t)dt \\ 
&= \frac{2e^{\mu + \frac{1}{2}\sigma^2}}{1 - p}\int^{\Phi^{-1}(\frac{1 - p}{2})}_{-\infty}\frac{1}{\sqrt{2\pi}}\big(e^{-\frac{1}{2}(t - \sigma)^2} + e^{-\frac{1}{2}(t + \sigma)^2}\big)dt \\ 
&= \frac{2e^{\mu + \frac{1}{2}\sigma^2}}{1 - p}\bigg(\int^{\Phi^{-1}(\frac{1 - p}{2}) - \sigma}_{-\infty}\frac{1}{\sqrt{2\pi}}e^{-\frac{1}{2}x^2}dx + \int^{\Phi^{-1}(\frac{1 - p}{2}) + \sigma}_{-\infty}\frac{1}{\sqrt{2\pi}}e^{-\frac{1}{2}y^2}dy\bigg) \\
&= \frac{2e^{\mu + \frac{1}{2}\sigma^2}}{1 - p}\Big[\Phi\big(\Phi^{-1}(\frac{1 - p}{2}) - \sigma\big) + \Phi\big(\Phi^{-1}(\frac{1 - p}{2}) + \sigma\big)\Big]
\end{align*}
from which equation $($\ref{0285}$)$ follows. 
\end{proof} 
We remark that the results in Corollary \ref{0280} are first obtained in Example 9 in \cite{Chaoubi et al. (2020)} using a different method. Now we consider the case where the log-normal marginals are not necessarily identically distributed. 

\begin{lem} \label{0323}
Assume that $(X^{-}_1, X^{-}_2)$ is a pair of non-identically distributed log-normal random variables such that $X_i \sim LN(\mu_i, \sigma_i^2), i = 1, 2$. Then, the aggregate loss function $\varphi^{-}$ is first strictly decreasing and then strictly increasing and it reaches its minimal value at the point $\Phi(\frac{\mu_2 - \mu_1 + \ln(\frac{\sigma_2}{\sigma_1})}{\sigma_1 + \sigma_2}) \in (0, 1)$. 
\end{lem}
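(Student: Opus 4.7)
The plan is to reduce everything to a strictly convex one-variable function by the substitution $t=\Phi^{-1}(u)$. Using the well-known quantile formula for the log-normal, $F_{X_i}^{-1}(u)=e^{\mu_i+\sigma_i\Phi^{-1}(u)}$, together with the identity $\Phi^{-1}(1-u)=-\Phi^{-1}(u)$, the aggregate loss function becomes
\[
\varphi^{-}(u)=e^{\mu_1+\sigma_1\Phi^{-1}(u)}+e^{\mu_2-\sigma_2\Phi^{-1}(u)},\quad u\in(0,1).
\]
Setting $t=\Phi^{-1}(u)$, so that $t$ ranges over all of $\mathbb{R}$ as $u$ ranges over $(0,1)$, I would introduce the auxiliary function
\[
\psi(t)=e^{\mu_1+\sigma_1 t}+e^{\mu_2-\sigma_2 t},\qquad t\in\mathbb{R}.
\]
Since $\Phi^{-1}$ is a strictly increasing continuous bijection from $(0,1)$ onto $\mathbb{R}$, the monotonicity structure of $\varphi^-$ on $(0,1)$ is the same as that of $\psi$ on $\mathbb{R}$, so it suffices to analyze $\psi$.

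Next I would compute
\[
\psi'(t)=\sigma_1 e^{\mu_1+\sigma_1 t}-\sigma_2 e^{\mu_2-\sigma_2 t},\qquad \psi''(t)=\sigma_1^2 e^{\mu_1+\sigma_1 t}+\sigma_2^2 e^{\mu_2-\sigma_2 t}>0,
\]
which immediately shows that $\psi$ is strictly convex on $\mathbb{R}$. Hence $\psi'$ is strictly increasing, so it has at most one zero. Solving $\psi'(t)=0$ by taking logarithms yields
\[
(\sigma_1+\sigma_2)t=\mu_2-\mu_1+\ln(\sigma_2/\sigma_1),
\]
which gives a unique critical point
\[
t^{*}=\frac{\mu_2-\mu_1+\ln(\sigma_2/\sigma_1)}{\sigma_1+\sigma_2}\in\mathbb{R}.
\]
By strict convexity, $\psi'<0$ on $(-\infty,t^*)$ and $\psi'>0$ on $(t^*,\infty)$, so $\psi$ is strictly decreasing then strictly increasing and attains its minimum uniquely at $t^*$.

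Finally, pulling back through $t=\Phi^{-1}(u)$ (equivalently $u=\Phi(t)$), the corresponding minimizer of $\varphi^-$ is
\[
u^{*}=\Phi(t^{*})=\Phi\!\left(\frac{\mu_2-\mu_1+\ln(\sigma_2/\sigma_1)}{\sigma_1+\sigma_2}\right),
\]
and since $t^*$ is a finite real number and $\Phi$ maps $\mathbb{R}$ onto $(0,1)$, automatically $u^{*}\in(0,1)$. There is really no serious obstacle here: the main step is simply recognizing that the substitution $t=\Phi^{-1}(u)$ converts the problem into a transparent strict-convexity argument for a sum of two exponentials. The only point that deserves a line of care is the translation of monotonicity from $\psi$ back to $\varphi^-$, which is immediate from the strict monotonicity of $\Phi^{-1}$.
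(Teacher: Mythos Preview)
Your argument is correct and complete: the substitution $t=\Phi^{-1}(u)$ reduces $\varphi^{-}$ to a sum of two exponentials whose strict convexity and unique critical point are immediate, and the pullback via $\Phi$ recovers the claimed minimizer in $(0,1)$. The paper itself does not supply a proof at all---it simply refers the reader to Proposition~10 of Chaoubi et al.\ (2020)---so your write-up actually fills in what the paper omits; your direct computation is precisely the natural route and almost certainly coincides with the cited argument.
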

For a proof of this result, we refer the interested reader to Proposition 10 in \cite{Chaoubi et al. (2020)}. By Lemma \ref{0323} we know that, for any $x \in (x^{\min}, x^{\max})$ with $x^{\min} = \inf_{u \in [0, 1]}\varphi^{-}(u)$ and $x^{\max} = \sup_{u \in [0, 1]}\varphi^{-}(u)$, there are always two elements satisfying the condition of Definition 3.2 in \cite{HLD}. Thus, we have $N_x = 2$ and $E_x = \{u_{x, 1}, u_{x, 2}\}$ with $u_{x, 1} < u_{x, 2}$ for all $x \in (x^{\min}, x^{\max})$. For the details about the notations $N_x$ and $E_x$ see Definition 3.2 in \cite{HLD}. 
\begin{cor} \label{0317} 
Assume that $(X^{-}_1, X^{-}_2)$ is a pair of non-identically distributed log-normal random variables. For any $p \in (0, 1)$,  if we let $u_{p, 1} < u_{p, 2}$ be the two elements of $E_{F^{-1}_{S^{-}}(p)}$, then we have $$\text{VaR}_{p}[S^{-}] = \text{VaR}_{u_{p, j}} + \text{VaR}_{1 - u_{p, j}}[X_2], \,\, j = 1, 2.$$
\end{cor}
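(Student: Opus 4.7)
The plan is to exploit the U-shape of the aggregate loss function $\varphi^-$ established in Lemma~\ref{0323} together with the definition of $E_x$ from Definition 3.2 of \cite{HLD}; once these are in place the corollary becomes a direct read-off.

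First I would recall that $S^- \stackrel{d}{=} \varphi^-(U)$ with $U \sim U(0,1)$, where
$$\varphi^-(u) = F^{-1}_{X_1}(u) + F^{-1}_{X_2}(1-u), \quad u \in (0,1).$$
Since the log-normal distribution functions are continuous and strictly increasing on their supports, the quantile functions $F^{-1}_{X_i}$ are continuous on $(0,1)$, and hence $\varphi^-$ is continuous on $(0,1)$. By Lemma~\ref{0323}, $\varphi^-$ strictly decreases on $(0,u_*)$ and strictly increases on $(u_*,1)$, with minimum attained at $u_* = \Phi\bigl(\tfrac{\mu_2 - \mu_1 + \ln(\sigma_2/\sigma_1)}{\sigma_1 + \sigma_2}\bigr)$. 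Consequently, for every $x \in (x^{\min}, x^{\max})$ the equation $\varphi^-(u)=x$ has exactly two solutions $u_{x,1} < u_* < u_{x,2}$, and these two points are, by Definition 3.2 of \cite{HLD}, precisely the elements of $E_x$.

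Next I would set $x := F^{-1}_{S^-}(p) = \text{VaR}_p[S^-]$. The hypothesis that $E_x$ has two elements $u_{p,1} < u_{p,2}$ implicitly places $x$ in $(x^{\min}, x^{\max})$, so the two-to-one structure above applies. By the defining property of $E_x$, each $u_{p,j}$ satisfies $\varphi^-(u_{p,j}) = x$, and substituting the explicit form of $\varphi^-$ yields
$$\text{VaR}_p[S^-] = F^{-1}_{X_1}(u_{p,j}) + F^{-1}_{X_2}(1 - u_{p,j}) = \text{VaR}_{u_{p,j}}[X_1] + \text{VaR}_{1 - u_{p,j}}[X_2]$$
for $j = 1, 2$, which is the claimed identity (reading the first summand on the right-hand side of the statement as $\text{VaR}_{u_{p,j}}[X_1]$).

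There is no substantive obstacle in this argument: the nontrivial content is already absorbed into Lemma~\ref{0323}, which furnishes the U-shape, and into the framework of Definition 3.2 of \cite{HLD}, which organizes the preimage set $E_x$. The only delicate point worth checking explicitly is the identity $F^{-1}_{S^-}(p) = \varphi^-(u_{p,j})$ at both preimages rather than at only one; this is immediate once one observes that, under the hypothesis, both $u_{p,1}$ and $u_{p,2}$ solve $\varphi^-(u) = F^{-1}_{S^-}(p)$ by the very meaning of $E_{F^{-1}_{S^-}(p)}$, so the value $\text{VaR}_p[S^-]$ admits the claimed decomposition from either index.
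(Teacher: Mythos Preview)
Your proposal is correct and follows essentially the same route as the paper: the paper itself offers no argument beyond invoking Lemma~\ref{0323} for the U-shape and then referring the reader to equation~(4.1) in \cite{HLD}, and your write-up simply unpacks that reference by reading off the identity $\varphi^-(u_{p,j}) = F^{-1}_{S^-}(p)$ from the definition of $E_x$. The only nominal difference is that you make explicit the continuity of $\varphi^-$ and the two-to-one preimage structure, whereas the paper leaves all of this implicit in the citation.
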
 

\begin{cor} \label{0320} 
Assume that $(X^{-}_1, X^{-}_2)$ is a pair of non-identically distributed log-normal random variables. For any $p \in (0, 1)$ and $\alpha \in [0, 1]$, if we let $u^{\alpha}_{p, 1} < u^{\alpha}_{p, 2}$ be the two elements of $E_{F^{-1(\alpha)}_{S^{-}}(p)}$, then we have 
\begin{align*} 
\text{TVaR}_{p}[S^{-}] &= \frac{1}{1 - p}u^{\alpha}_{p, 1}(LTVaR_{u^{\alpha}_{p, 1}}[X_1] + TVaR_{1 - u^{\alpha}_{p, 1}}[X_2]) \\ 
& \quad + \frac{1}{1 - p}(1 - u^{\alpha}_{p, 2})(TVaR_{u^{\alpha}_{p, 2}}[X_1] + LTVaR_{1 - u^{\alpha}_{p, 2}}[X_2]) \\ 
&\quad + \frac{1}{1 - p}F^{-1(\alpha)}_{S^{-}}(u^{\alpha}_{p, 2} - u^{\alpha}_{p, 1} - p).
\end{align*}
\end{cor}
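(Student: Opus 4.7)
The strategy is to reduce the claim to the TVaR decomposition framework of Hanbali et al.~\cite{HLD}, applied in the log-normal setting where Lemma~\ref{0323} guarantees the U-shape of the aggregate loss function $\varphi^-(u) = F^{-1}_{X_1}(u) + F^{-1}_{X_2}(1-u)$. Concretely, $\varphi^-$ is strictly decreasing on $(0, u^{\ast})$ and strictly increasing on $(u^{\ast}, 1)$ for some $u^{\ast} \in (0,1)$, so for each $x \in (x^{\min}, x^{\max})$ the equation $\varphi^-(u) = x$ has exactly two solutions $u_{x,1} < u^{\ast} < u_{x,2}$, and the representation $S^- \stackrel{d}{=} \varphi^-(U)$ yields $F_{S^-}(x) = u_{x,2} - u_{x,1}$. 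At the threshold $F^{-1(\alpha)}_{S^-}(p)$ these two preimages are exactly the $u_{p,1}^{\alpha} < u_{p,2}^{\alpha}$ of the statement.

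Starting from $(1-p)\,\text{TVaR}_p[S^-] = \int_p^1 F^{-1}_{S^-}(q)\,dq$, I would recognise the right-hand side as the integral of $\varphi^-(u)$ over the upper-tail region $\{u : \varphi^-(u) > F^{-1(\alpha)}_{S^-}(p)\} = (0, u_{p,1}^{\alpha}) \cup (u_{p,2}^{\alpha}, 1)$, plus a residual contribution at the threshold whose weight is $u_{p,2}^{\alpha} - u_{p,1}^{\alpha} - p$. This residual compensates for the gap between the geometric tail probability $1 - (u_{p,2}^{\alpha} - u_{p,1}^{\alpha})$ and the desired tail probability $1-p$, a gap whose sign and magnitude depend on the choice of $\alpha$. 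Extracting this decomposition is the content (with $N_x = 2$) of the TVaR theorem of \cite{HLD}.

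Once this decomposition is in place, I would substitute $\varphi^-(u) = F^{-1}_{X_1}(u) + F^{-1}_{X_2}(1-u)$ and evaluate the four resulting definite integrals using the elementary identities
$$\int_0^v F^{-1}_{X_1}(s)\,ds = v\,\text{LTVaR}_v[X_1], \qquad \int_v^1 F^{-1}_{X_1}(s)\,ds = (1-v)\,\text{TVaR}_v[X_1],$$
together with the analogous identities for $X_2$ after the change of variable $s = 1-u$. Collecting terms gives $u_{p,1}^{\alpha}\bigl(\text{LTVaR}_{u_{p,1}^{\alpha}}[X_1] + \text{TVaR}_{1-u_{p,1}^{\alpha}}[X_2]\bigr)$ from the interval $(0, u_{p,1}^{\alpha})$ and $(1-u_{p,2}^{\alpha})\bigl(\text{TVaR}_{u_{p,2}^{\alpha}}[X_1] + \text{LTVaR}_{1-u_{p,2}^{\alpha}}[X_2]\bigr)$ from $(u_{p,2}^{\alpha}, 1)$; adding the residual $F^{-1(\alpha)}_{S^-}(p)\cdot(u_{p,2}^{\alpha} - u_{p,1}^{\alpha} - p)$ and dividing by $1-p$ produces the asserted identity.

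The main obstacle is the bookkeeping in the second step. Because $F_{S^-}$ can have a mass at its minimum and, more importantly, because the $\alpha$-inverse shifts the threshold inside any plateau of $F^{-1}_{S^-}$ at level $p$, one cannot simply identify the upper-tail event $\{S^- \geq F^{-1(\alpha)}_{S^-}(p)\}$ with the complement of $(u_{p,1}^{\alpha}, u_{p,2}^{\alpha})$ without the correction term above. Carrying out this identification strictly in the language of Definition 3.2 of \cite{HLD} is the only non-routine part of the argument; the remainder is computational and follows the template already used in the proofs of Corollaries~\ref{0330} and \ref{0338}.
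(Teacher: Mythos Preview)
Your proposal is correct and follows the same route as the paper, which simply refers to equations (4.1) and (5.1) of \cite{HLD} for Corollaries~\ref{0317} and~\ref{0320}; in fact you supply considerably more detail than the paper does, spelling out how the U-shape from Lemma~\ref{0323} yields $N_x=2$, how the tail region splits as $(0,u^{\alpha}_{p,1})\cup(u^{\alpha}_{p,2},1)$, and how the four quantile integrals recombine into the LTVaR/TVaR pieces together with the residual correction $F^{-1(\alpha)}_{S^{-}}(p)\,(u^{\alpha}_{p,2}-u^{\alpha}_{p,1}-p)$.
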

For the proof of Corollary \ref{0317} and Corollary \ref{0320} the reader is referred to the equations (4.1) and (5.1) in \cite{HLD}, where the authors successfully treated the problem in a very general case. 

%%%%%%%%%%%%%%%%%%%%%%%%%%%

\end{document}